\documentclass[11pt]{article}
\usepackage[a4paper, margin=1in]{geometry}
\usepackage{amsmath,amssymb,amsthm}
\usepackage{color}
\usepackage{authblk}
\usepackage{graphicx}
\usepackage{hyperref}
\usepackage[capitalize,nameinlink]{cleveref}
\usepackage{crossreftools}

\newtheorem{theorem}{Theorem}
\newtheorem{lemma}[theorem]{Lemma}
\newtheorem{claim}[theorem]{Claim}
\newtheorem{subclaim}[theorem]{Subclaim}
\newtheorem{definition}[theorem]{Definition}

\newenvironment{subproof}[1][\proofname]{%
	\begin{proof}[#1]%
	}{%
	\end{proof}%
}

\usepackage{float}
\floatstyle{ruled}
\newfloat{algorithm}{tbp}{loa}
\providecommand{\algorithmname}{Algorithm}
\floatname{algorithm}{\protect\algorithmname}

\begin{document}
	
	\title{Bellman-Ford in Almost-Linear Time for Dense Graphs}
	
	\author[1]{George Z. Li}
	\author[1]{Jason Li}
	\author[2]{Junkai Zhang}
	\affil[1]{Carnegie Mellon University $\{\texttt{gzli}, \texttt{jmli}\}$\texttt{@cs.cmu.edu}}
	\affil[2]{Tsinghua University \texttt{zhangjk22@mails.tsinghua.edu.cn}}
	
	\date{}
	\maketitle
	
	\begin{abstract}
		We consider the single-source shortest paths problem on a directed graph with real-valued (possibly negative) edge weights and solve this problem in $n^{2+o(1)}$ time by refining the shortcutting procedure introduced in Li, Li, Rao, and Zhang (2026).
	\end{abstract}

	\section{Introduction}

	We consider the problem of computing single-source shortest paths on weighted directed graphs. Despite its long history, the past few years have witnessed exciting breakthroughs for this problem in multiple different settings. When all edge weights are non-negative, the classic Dijkstra's algorithm solves the problem in $O(m+n\log n)$ time, which has been improved for sparse graphs to $O(m\log^{2/3}n)$~\cite{duan2025breaking}. When edge weights are integral (but possibly negative), the problem was solved in $\tilde O(m)$ time\footnote{$\tilde O(\cdot)$ ignores factors poly-logarithmic in $n$.}~\cite{bernstein2025negative} and $m^{1+o(1)}$ time~\cite{chen2025maximum}, with the latter solving the more general problem of minimum-cost flow. 
	
	In the general case of real-valued edge weights, the 70-year-old Bellman-Ford algorithm was the fastest until the recent breakthrough $\tilde{O}(mn^{8/9})$ time algorithm of Fineman~\cite{fineman2024single}. Subsequent work has refined Fineman's approach~\cite{huang2025faster,huang2026faster,quanrud2025sparsification}, improving the runtime to $\tilde{O}(mn^{0.696}+(mn)^{0.850})$. Li, Li, Rao, and Zhang \cite{li2025shortcutting} recently introduced a new approach for the problem. Their algorithm adds shortcut edges to the graph to reduce the number of negative edges along shortest paths by a constant factor. By iterating this procedure, they can compute the shortest paths in the shortcutted graph using an $O(1)$-hop shortest path algorithm.
	
	The bottleneck of the algorithm in \cite{li2025shortcutting} was the $\tilde{O}(n^{2.5})$ time required to construct the shortcut edges added in each iteration. We refine their shortcut construction to obtain our $n^{2+o(1)}$ algorithm\footnote{We very recently learned that a similar result has been independently obtained by Sanjeev Khanna and Junkai Song. Their work appears on arXiv simultaneously with ours.}. Along the way, we develop a stronger version of the betweenness reduction technique used in all prior works on the topic, which we believe may be of independent interest.

	\begin{theorem}\label{thm:main}
		There is a randomized algorithm for single-source shortest paths on real-weighted directed graphs which runs in $n^{2+o(1)}$ time. 
	\end{theorem}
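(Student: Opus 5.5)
The plan follows the framework of \cite{li2025shortcutting}, replacing only the shortcut-construction step. We may assume no negative cycle; one is detected when the final hop-bounded computation fails to stabilize. We may also assume each vertex has at most one negative outgoing edge, via the standard vertex-splitting reduction, which costs $O(m)=O(n^2)$. The algorithm then runs in $O(\log n)$ rounds. In each round we add a set of shortcut edges $(u,v)$ with weight $w(u,v)\ge \mathrm{dist}(u,v)$, so that no distances change. The goal of a round is that, for every pair $s,t$, some shortest $s$--$t$ path in the new graph uses a constant factor fewer negative edges than before. After $O(\log n)$ rounds every pair has a shortest path with $O(1)$ negative edges. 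A Bellman--Ford/Dijkstra hybrid limited to $O(1)$ negative hops then computes single-source distances in $\tilde O(n^2)$ time. The total time is $n^{2+o(1)}$, provided each round takes $n^{2+o(1)}$ time and the shortcut graph stays of size $O(n^2)$, which holds because shortcuts can be deduplicated per ordered pair.

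A single round goes as follows. First we compute a price function that makes the current graph reasonably well behaved, e.g.\ by making $r$-hop distances nonnegative for a small parameter $r$. Second, we apply a \emph{betweenness reduction}: we sample a random set of roughly $\tilde O(n/r)$ pivot negative vertices and compute $O(1)$-negative-hop distances to and from them in $\tilde O(n^2\cdot\text{pivots}/n)$ time. These distances reweight the graph so that each negative vertex has only about $n/r$ vertices ``between'' it and the others. This is the step where the paper announces a stronger version, and I would aim for the following guarantee: after reweighting, for each negative vertex $x$ the set of vertices $v$ with $\mathrm{dist}(x,v)+\mathrm{dist}(v,\cdot)<0$ has size $n^{o(1)}$-controlled, uniformly rather than only on average. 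Third, we recurse on these small betweenness sets to form shortcuts. The strengthened bound lets the recursion be organized as $O(1/\epsilon)$ levels of branching factor $n^{\epsilon}$, each level costing $n^{2+O(\epsilon)}$. Choosing $\epsilon=1/\sqrt{\log n}$-type parameters keeps the compound blowup at $n^{o(1)}$.

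The main obstacle is the shortcut construction itself, which is where \cite{li2025shortcutting} pays $\tilde O(n^{2.5})$. There, shortcutting between pairs of negative edges essentially requires $\sqrt n$ hop-bounded searches from each of $n$ sources. I would first replace the uniform $\sqrt n$ threshold by a multilevel hierarchy of thresholds $n^{i\epsilon}$. At each level, the strengthened betweenness reduction bounds the work per source by $n^{1+\epsilon}$ rather than $n^{1.5}$. The delicate part is the invariant to carry across levels: after the level-$i$ shortcuts, every shortest path must still be representable with at most a constant fraction of its negative edges. Establishing this requires a charging argument showing that each negative edge lost at one level is not reintroduced later. The argument also has to hold with high probability over the random pivot choices, so a union bound over the $n^2$ pairs and $O(\log n)$ rounds finishes the proof of \cref{thm:main}.
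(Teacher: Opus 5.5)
There is a genuine gap, and it lies in the two places where the $n^{2+o(1)}$ bound is actually won.

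\textbf{Betweenness reduction.} This step is mis-costed, and it lacks the recursion that makes it affordable. Computing $O(1)$-negative-hop distances from one pivot in a dense graph already costs $\Theta(n^2)$, so $p$ pivots cost $\tilde O(pn^2)$, not $\tilde O(pn)$. Moreover, the reweighting cannot be read off from per-pivot distances: it is itself a negative-weight shortest-path computation in an auxiliary graph with one negative edge per pivot, since paths may chain through several pivots. For the charging bound of \Cref{lem:V-out-in-size}, $\sum_r(|\tilde V^{in}_r|+|\tilde V^{out}_r|)^2\le O(kn^2/b)$, to be $n^{2+o(1)}$, you need $b=k/n^{o(1)}$ (the paper takes $b=k/2^{\sqrt{\log n}}$), i.e.\ $\tilde\Theta(b)$ pivots. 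Direct computation cannot afford this when $k$ is large. The paper pays for it with one recursive SSSP call on an instance with only $O(b\log n)$ negative vertices (\Cref{lem:strong-bw-reduction}). This gives $T(n,k)\le O(\log n)\,T(O(n\log^2 n),k/2^{\sqrt{\log n}})+n^{2+o(1)}$ with depth $\sqrt{\log n}$. Your ``$O(1/\epsilon)$ levels of branching $n^{\epsilon}$'' has similar numerology, but it recurses on ``betweenness sets'' rather than on the number of negative vertices, so nothing pays for the pivots.

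The uniform per-negative-vertex bound you aim for is also not what betweenness reduction gives. It bounds, for each pair $(s,t)$, how many negative vertices lie between them. A single $r$ may lie between almost all pairs within the per-pair budget $k/b\ge 1$. Only the averaged bound is available, which is exactly why the paper buckets negative vertices into geometric scales $\kappa_i$. Separately, your preliminary step of making all short-hop distances nonnegative is, as stated, circular: it forces every reweighted edge to be nonnegative, which already solves the problem.

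\textbf{Shortcut construction.} You give no mechanism that removes the factor $n$ here. The $n^{2.5}$ cost of the earlier algorithm comes from enumerating, for each $v\in V^{out}_r$ (resp.\ $V^{in}_r$), all out-neighbours (resp.\ in-neighbours) of $v$. A hierarchy of thresholds alone does not change the fact that each search-set element pays $n$.

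The paper's key idea is to add, per scale, one shifted copy $v_{in}$ and one $v_{out}$ of every vertex. For example, $\delta(v_{in})$ is the minimum of $\omega(v_j,\tilde r)+\delta(v_j)$ over copies $v_j$ and over $r$ in the current scale with $v_j\in V^{in}_r$. Invariants (I1)--(I2) guarantee that no distance decreases. Then every locally-negative path $u\to v_j\to\tilde r$ is either simulated by $u\to v_{in}\to\tilde r$, or else $u\in\tilde V^{in}_{r^{in}_v}$ for a single negative vertex $r^{in}_v$ of the same scale. So the leftover brute force costs $O(\kappa_i(|V^{in}_r|+|V^{out}_r|))=O((|\tilde V^{in}_r|+|\tilde V^{out}_r|)^2)$. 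Proving $u\in\tilde V^{in}_{r^{in}_v}$ uses the copy-to-copy edge $v_j\to v_{j^{in}}$. That is why betweenness is reduced on $H_\tau$, which contains those imaginary edges, and in the $h$-hop, endpoint-extended strong form.

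Your deduplicated direct edges between original vertices do not fit the budget either. The third case of hop reduction is routed through the Steiner vertex $\tilde r$ precisely because the direct alternative needs, for every $r$, an edge for every pair in $N^{in}(V^{in}_r)\times N^{out}(V^{out}_r)$, minimized over $r$. That can be $\Theta(kn^2)$ work even though the output is only $n^2$ edges. What must be controlled instead is vertex growth, $O(\log n)$ copies per vertex per round, so that recursive instances have $O(n\log^2 n)$ vertices.

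Finally, no charging argument against ``reintroduced'' negative edges is needed. Hop reduction is a local three-case argument on each group of three consecutive negative edges, given that all locally-negative two-edge paths are shortcut and the (S4)/(S5) edges are present. What you do need to check is that no new negative vertices are created, so that $k$ does not grow.
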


	\section{Preliminaries}
	
	All graphs $G=(V,E)$ in this paper are directed and have (possibly negative) real edge weights. If a vertex $u\in V$ has negative outgoing edges, we say $u$ is a \emph{negative vertex}, and let $N$ denote the set of negative vertices. For a vertex $u\in V$, we define $N^{out}(u)=\{v\in V:(u,v)\in E\}$ and $N^{in}(u)=\{v\in V:(v,u)\in E\}$ as the out-neighbors and in-neighbors of $u$. Finally, we let the distance between a set and a point be the minimum distance, i.e., $d(S,v)=\min_{s\in S}d(s,v)$.
	
	Like prior work, we heavily rely on a subroutine for efficiently computing single-source shortest paths with few negative edges. For a positive integer $h$ called the \emph{hop bound}, define the \emph{$h$-negative hop distance} $d^h(s,t)$ as the minimum weight $(s,t)$-path with at most $h$ negative edges. Given a source vertex $s$, we can compute $d^h(s,t)$ for all other $t\in V$ in $O(h(m+n\log n))$ using a hybrid of Dijkstra and Bellman-Ford~\cite{dinitz2017hybrid}.
	
	Also like prior work, our algorithm makes use of \emph{potential functions} $\phi:V\to\mathbb R$ that reweight the graph, where each edge $(u,v)$ has new weight $w_\phi(u,v)=w(u,v)+\phi(u)-\phi(v)$. The key property of potential functions is that for any vertices $s,t\in V$, all $(s,t)$-paths have their weight shifted by the same additive $\phi(s)-\phi(t)$. A potential $\phi$ is \emph{valid} if $w_\phi(u,v)\ge0$ for all edges $(u,v)$ with $w(u,v)\ge0$, i.e., it does not introduce any new negative edges.
	
	\subsection{Copies of Vertices}\label{sec:copy}

	In our algorithm, we will build graphs $G_{\tau}$ from the original input graph $G=(V,E)$, by adding vertices and edges. Vertices in $G_{\tau}$ are classified into two categories: a base vertex and a copy of some base vertex. The set of base vertices $V_0$ is initialized as $V_0=V$ and the set of copy vertices is initially empty. Throughout the algorithm, we will add new base vertices and copies of base vertices. Suppose we have $c$ copies of each base vertex $v\in V_0$. The original vertex is denoted $v_0$ and the copies are labeled $v_{i}$ for $i\in[c]$. We maintain the following invariants:
		\begin{enumerate}
		\item[{\crtcrossreflabel{(I1)}[item:graph-invariant-1]}] each copy $v_i$ is associated with a shift $\delta(v_i)\in\mathbb{R}$, with $\delta(v_0)=0$ for all $v\in V_0$.
		\item[{\crtcrossreflabel{(I2)}[item:graph-invariant-2]}] every edge $(u_i,v_j)\in G_{\tau}$ satisfies $\omega(u_i,v_j)\ge d_{G_{\tau}}(u_0,v_0)+\delta(v_j)-\delta(u_i)$
	\end{enumerate}
	These invariants formalize what it means for an added vertex to be a copy of a base vertex. The shift $\delta(v_i)$ represents how much ``higher'' the copy $v_i$ is relative to the base vertex $v_0$, where distances to higher vertices are larger. This is formalized in the constraints on the edge weights in \ref{item:graph-invariant-2}, which also guarantees that shortest path distances aren't decreased by adding copies.

	When we add a base vertex to the graph, we will prove that invariants are maintained. When we add a copy of base vertices, we will define a shift and apply the following claim:
	
	\begin{claim}\label[claim]{cl:add-copy}
		Consider graph $G_{\tau}$ satisfying \ref{item:graph-invariant-1}--\ref{item:graph-invariant-2}. For any base vertex $v\in V_0$, suppose one adds a new copy $v_{new}$ of $v$ with  shift $\delta(v_{new})\in\mathbb{R}$ to $G_{\tau}$, along with the following edges:
		\begin{itemize}
			\item For each $u\in V$, let $C(u,v)=\{v_i:\omega(u,v_i)+\delta(v_{new})-\delta(v_i)\ge 0\}$. If $C(u,v)\neq\emptyset$, add edge $(u,v_{new})$ of weight $\omega(u,v_{new})=\min_{v_i\in C(u,v)}\omega(u,v_i)+\delta(v_{new})-\delta(v_i)$.
			\item For each $w\in V$, let $C(v,w)=\{v_i:\omega(v_i,w)+\delta(v_i)-\delta(v_{new})\ge 0\}$. If $C(v,w)\neq\emptyset$, add edge $(v_{new},w)$ of weight $\omega(v_{new},w)=\min_{v_i\in C(v,w)}\omega(v_i,w)+\delta(v_i)-\delta(v_{new})$.
		\end{itemize}
		Then invariants \ref{item:graph-invariant-1} and \ref{item:graph-invariant-2} are still satisfied in the resulting graph.
	\end{claim}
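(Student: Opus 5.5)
Invariant \ref{item:graph-invariant-1} is immediate, since $v_{new}$ is assigned a shift $\delta(v_{new})\in\mathbb{R}$ and no base vertex gets a new shift. The work is in \ref{item:graph-invariant-2}. My plan is to first show that adding $v_{new}$ and its edges does not change $d_{G_\tau}(x_0,y_0)$ for any base vertices $x,y$. Then I will check the inequality of \ref{item:graph-invariant-2} for the new edges. The old edges are automatically fine once the base distances are unchanged.

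\textbf{Step 1: new edges satisfy the inequality with respect to the old distances.} Write $d$ for distances in the old graph. For a new edge $(u,v_{new})$, where $u$ is some vertex (a base vertex or a copy $u_i$), let $v_i\in C(u,v)$ attain the minimum. By \ref{item:graph-invariant-2} for the old edge $(u_i,v_i)$,
\[
\omega(u_i,v_{new}) = \omega(u_i,v_i)+\delta(v_{new})-\delta(v_i) \ge d(u_0,v_0)+\delta(v_i)-\delta(u_i)+\delta(v_{new})-\delta(v_i),
\]
and the right-hand side equals $d(u_0,v_0)+\delta(v_{new})-\delta(u_i)$, as required. The outgoing edges $(v_{new},w)$ are handled symmetrically. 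Here the filtering by $C(\cdot,\cdot)$ (nonnegativity) plays no role; only the shift bookkeeping matters.

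\textbf{Step 2: base distances do not decrease.} It suffices to show that every edge of the new graph, $(x_i,y_j)$, satisfies $\omega(x_i,y_j)\ge d(x_0,y_0)+\delta(y_j)-\delta(x_i)$ with respect to the \emph{old} distances $d$. Old edges satisfy this by the old invariant, and new edges by Step 1. Now take any path from $a_0$ to $b_0$ in the new graph, with vertices $a_0=x^{(0)}_{i_0},x^{(1)}_{i_1},\dots,x^{(k)}_{i_k}=b_0$. Summing the inequality along the path, the shift terms telescope to $\delta(b_0)-\delta(a_0)=0$. The base-distance terms sum to at least $d(a_0,b_0)$ by the triangle inequality for $d$ in the old graph.

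\textbf{Main obstacle.} The main obstacle is making the triangle-inequality step rigorous when the old graph has negative edges. The chain $d(x^{(0)}_0,x^{(1)}_0)+\dots\ge d(a_0,b_0)$ needs $d$ to be a genuine shortest-path metric, i.e.\ no negative cycles (implicitly assumed, or else every distance is $-\infty$ and the invariant is vacuous). A second, smaller point is the case where the same base vertex repeats, e.g.\ $d(v_0,v_0)=0$ for an edge between two copies of $v$; this is fine because $d(x_0,x_0)\le 0$, and equals $0$ absent negative cycles.

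Given Step 2, the new distances are at most the old ones because the graph only grew, and at least the old ones by the telescoping argument, so they are equal. Hence the inequalities from Steps 1 and 2 are exactly \ref{item:graph-invariant-2} in the resulting graph. This also shows that adding the copy does not create a negative cycle through base vertices. I would also remark that the nonnegativity filter in $C$ is only needed elsewhere in the paper, to avoid introducing new negative edges.
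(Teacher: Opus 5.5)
Your proof is correct, and your Step 1 is exactly the paper's argument, which stops there. Your Step 2 and the ``main obstacle'' are not needed for the claim itself. Since the new graph contains $G_\tau$, its distances can only be smaller, so the inequality $\omega(x_i,y_j)\ge d_{G_\tau}(x_0,y_0)+\delta(y_j)-\delta(x_i)$ with respect to the old distances already gives (I2) in the resulting graph, for both old and new edges. That said, your telescoping argument is valid, and it is precisely what justifies the paper's later assertion that copies added via this claim do not decrease shortest-path distances.
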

    \begin{proof}
        Invariant \ref{item:graph-invariant-1} is still satisfied since the new vertex $v_{new}$ has an associated shift $\delta(v_{new})$. Next, we need to verify Invariant \ref{item:graph-invariant-2} for edges incident to $v_{new}$. For an in-edge $(u_i,v_{new})$, the weight is $\omega(u_i,v_{new})=\omega(u_i,v_j)+\delta(v_{new})-\delta(v_j)$ for some $v_j\in C(u_i,v)$ by definition in the claim statement. But by Invariant \ref{item:graph-invariant-2}, we know $w(u_i,v_j)\ge d_{G_{\tau}}(u_0,v_0)+\delta(v_j)-\delta(u_i)$. Thus, we have $\omega(u_i,v_{new})\ge d_{G_{\tau}}(u_0,v_0)+\delta(v_{new})-\delta(u_i)$, as required by Invariant \ref{item:graph-invariant-2}. The proof for out-edges $(v_{new},w_j)$ is symmetric.
    \end{proof}

	\subsection{Preprocessing: Unique Incident Negative Edges}
	
	We will preprocess the graph using the following routine, so that each vertex $u$ has at most one negative edge incident to $u$. This preprocessing step is implicit in \cite{fineman2024single}, and our construction is the exact same as the one given in Lemma 4 of \cite{li2025shortcutting}.
	\begin{lemma}\label[lemma]{lem:one-negative-outgoing-edge}
		Given a graph $G=(V,E)$ with $k$ negative vertices, there is a linear-time algorithm that outputs a graph $G'=(V',E')$ with $V'\supseteq V$ such that
		\begin{enumerate}
			\item For any $s,t\in V$ and hop bound $h$, we have $d_{G'}^h(s,t)=d_G^h(s,t)$,
			\item Each vertex $u$ has at most one negative edge incident to $u$,
			\item $G'$ also has $k$ negative vertices, and $|V'|=|V|+k$.
		\end{enumerate}
	\end{lemma}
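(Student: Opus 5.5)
The construction is a vertex split: for each negative vertex $u$, route its negative out-edges through a new intermediate vertex so that every vertex ends up touching at most one negative edge. Concretely, I take $V'=V\cup\{u':u\in N\}$ and set $\mu_u=\min_{(u,v)\in E,\,w(u,v)<0} w(u,v)<0$, the most negative out-edge weight of $u$. I add the edge $(u,u')$ with weight $\mu_u$. Each edge $(u,v)$ with $u\in N$ is replaced by $(u',v)$ with weight $w(u,v)-\mu_u\ge 0$; this applies to all out-edges of $u$, negative or not, and the new weight is nonnegative because $\mu_u\le w(u,v)$ for every negative edge and $\mu_u<0$. Edges out of non-negative vertices are unchanged. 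This takes linear time, and $|V'|=|V|+k$.

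The second and third properties are then immediate. The only negative edges in $G'$ are the edges $(u,u')$, one per $u\in N$, so $G'$ has exactly $k$ negative vertices. Each $u$ has the single negative incident edge $(u,u')$: its other out-edges were moved to $u'$, and it receives no negative in-edges, since the only negative edges are of the form $(x,x')$ and those enter copy vertices. Each $u'$ is incident only to the negative edge $(u,u')$, because its out-edges are nonnegative.

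For the first property I will use a path correspondence in both directions.
\begin{itemize}
\item From $G$ to $G'$: each edge $(u,v)$ with $u\in N$ becomes the two-edge path $u\to u'\to v$ of the same total weight $w(u,v)$. This path uses exactly one negative edge, the edge $(u,u')$, whereas the original edge may have been nonnegative.
\item From $G'$ to $G$: every $s$–$t$ path in $G'$ with $s,t\in V$ visits each $u'$ only via $u\to u'\to v$, so it contracts back to a path in $G$ with the same weight.
\end{itemize}

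The main obstacle is the hop count. A nonnegative edge out of a negative vertex costs one negative hop in $G'$ but zero in $G$, so the equality $d^h_{G'}=d^h_G$ could fail. I see two ways to handle this.
\begin{itemize}
\item Count hops per negative vertex: every edge leaving $u\in N$ counts as a hop in $G$. Then the correspondence above preserves both weight and hop count exactly, which gives $d^h_{G'}=d^h_G$.
\item Keep nonnegative edges out of $u$ at $u$ itself, and move only the negative edges to $u'$. The negative out-edges still have weights $w(u,v)-\mu_u\ge0$, so $(u,u')$ remains the unique negative edge at $u$. In this version, taking a negative edge $(u,v)$ in $G$ corresponds to the path $u\to u'\to v$ with one negative edge, and taking a nonnegative edge corresponds to itself with none. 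Weights and negative-edge counts then match exactly under the correspondence, and contraction of $G'$-paths is still weight- and count-preserving.
\end{itemize}
I would adopt the second variant, since it makes property 1 hold verbatim. What remains to check is that it still gives exactly $k$ negative vertices and at most one negative edge per vertex.
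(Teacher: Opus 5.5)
Your first variant is exactly the paper's construction. The paper splits each negative vertex $u$, adds $(u,u')$ with the minimum out-weight, and moves \emph{every} out-edge of $u$ to $u'$ with the shifted weight; it then asserts that the three properties are straightforward to verify.

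Your objection to that construction is correct. Once all out-edges are moved, a walk in $G$ that leaves a negative vertex along a nonnegative edge must pay the extra negative edge $(u,u')$ in $G'$. So in general one only gets $d^h_{G'}(s,t)\ge d^h_G(s,t)$, with equality guaranteed only for unrestricted distances. For instance, if $u$ has out-edges $(u,a)$ of weight $-1$ and $(u,t)$ of weight $0$, then $d^0_G(u,t)=0$ while $d^0_{G'}(u,t)=\infty$.

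Your second variant, which moves only the negative out-edges, proves the lemma exactly as stated. The only negative edges of $G'$ are still the edges $(u,u')$, and all of them enter fresh vertices. So your argument for properties 2 and 3 carries over verbatim, and the item you list as remaining to check is immediate. The expand/contract correspondence now preserves both weight and number of negative edges in each direction, because each contracted segment $u\to u'\to v$ becomes an edge $(u,v)$ that was negative in $G$.

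What the paper's version buys in exchange is a stronger normal form: every negative vertex has exactly one out-edge. The paper relies on this later without restating it. In the proof of \Cref{lem:forward-backward} it takes $(r,r')$ to be the single out-edge of $r$, so that every path from $r_0$ in $H_{out}$ begins with that edge. With your variant a negative vertex may keep nonnegative out-edges, so that later step would need a small adjustment. For the lemma as written, though, your version is the one whose property 1 actually holds.
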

	\begin{proof}
		For each negative vertex $u$, consider its outgoing edges $(u,v_1),\ldots,(u,v_\ell)$, and suppose that $(u,v_1)$ is the edge of smallest weight, with ties broken arbitrarily. Create a new vertex $u'$, add the edge $(u,u')$ of weight $(u,v_1)$, and replace each edge $(u,v_i)$ by the edge $(u',v_i)$ of weight $w_G(u,v_i)-w_G(u,v_1)\ge0$. It is straightforward to verify all of the required conditions.
	\end{proof}
	
	During an iteration of our algorithm, we will compute a valid reweighting $\phi_{\tau}$, which may cause a negative edge in $G_{\tau}$ to become non-negative. However, even if a negative edge in $G_{\tau}$ is non-negative after the reweighting $\phi_{\tau}$, we will still call this edge negative through the rest of the iteration and count it towards hops in $h$-hop shortest paths. Following prior work, we refer to this as \emph{freezing} the set of negative edges before applying a reweighting. 
	
	In an iteration, we also add \emph{imaginary} negative edges to the graph. The heads of imaginary negative edges will not be considered a negative vertex, and will not be included in $N$, but will still be counted towards hops in $h$-hop shortest paths. This distinction is important when we apply betweenness reduction: hop-distance will be measured with respect to all negative edges but betweenness is only measured with respect to real ones (via the set $N$).

	\section{Technical Overview}\label{sec:technical-overview}
	
	We begin by reviewing the shortcutting approach by \cite{li2025shortcutting}. The first step is to compute a reweighting $\phi$ so that the negative paths in the graph have more structure. A similar betweenness reduction procedure has been key to all previous work beating Bellman-Ford. We remark that the notation below is slightly different from \cite{li2025shortcutting}.

	\begin{definition}
		For $s,t\in V$, the weak betweenness of $(s,t)$ is the number of negative vertices $r\in V$ such that $d^0(s,r)+d^{1}(r,t)<0$.
	\end{definition}
	\begin{lemma}[Betweenness reduction]\label[lemma]{lem:betweenness-reduction}
		Consider a graph $G$ with $k$ negative vertices, and freeze the negative edges. For any parameter $b\ge1$, there is a randomized algorithm that returns either a set of valid potentials $\phi$ or a negative cycle, such that with high probability, all pairs $(s,t)\in V\times V$ have weak betweenness at most $k/b$ under the new weights $w_\phi$. The algorithm makes one call to negative-weight single-source shortest paths on a graph with $m$ edges, $n$ vertices, and $O(b\log n)$ negative vertices, and takes $O(m\log{n})$ additional time.
	\end{lemma}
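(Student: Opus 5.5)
We follow the standard betweenness-reduction argument from Fineman's work and \cite{li2025shortcutting}. First we sample a set $R\subseteq N$ of $\Theta(b\log n)$ negative vertices uniformly at random. Then we build an auxiliary graph $H$: add a super-source $z$ with, for each $r\in R$, an edge $(z,r)$ of weight $0$ and an edge $(z,r')$ of weight $0$, where $r'$ denotes the out-neighbour of $r$ in the copy structure. Concretely, we want $z$ to reach, for each $r\in R$, both ``$r$ with zero negative hops'' and ``$r$ followed by one negative edge''. To make this precise we construct $H$ from a few layers.

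\textbf{Construction of $H$.} Take two copies of $G$:
\begin{itemize}
\item $G^{(0)}$ contains only the non-negative edges of $G$.
\item $G^{(1)}$ is a full copy of $G$.
\end{itemize}
We connect them with the edges $(u^{(0)},v^{(1)})$ for each negative edge $(u,v)$. We then add the edges $(z,r^{(0)})$ of weight $0$ for $r\in R$, and the edges $(v^{(0)},v^{(1)})$ of weight $0$. Alternatively, and more simply, we define the potential as
\[
\phi(v)=\min\Bigl(0,\ \min_{r\in R}\bigl(d^0(r,v)\bigr),\ \min_{r\in R} d^{1}(r,v)\Bigr)
\]
computed through one SSSP call. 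The graph for that call is $G$ augmented by $z$ with $0$-edges to the nodes of $R$ and to a zero-hop layer. Its only negative vertices are the $O(b\log n)$ vertices of $R$ (and their copies), because we delete all negative edges not leaving $R$ in the relevant layer. This is where the budget of $O(b\log n)$ negative vertices in the SSSP call comes from. If the call reports a negative cycle, that cycle maps back to one in $G$, and we return it.

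\textbf{Validity.} The potential is a shortest-path distance from $z$ in a graph containing every non-negative edge of $G$. So for each non-negative edge $(u,v)$ the triangle inequality gives $\phi(v)\le\phi(u)+w(u,v)$, that is, $w_\phi(u,v)\ge 0$.

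\textbf{Betweenness reduction.} We use the potential in the form $\phi(v)=\min\bigl(0,\ \min_{r\in R} d^{\le 1}\text{-type distance}(r,v)\bigr)$, designed to satisfy two properties:
\begin{enumerate}
\item $d_\phi^0(s,x)\ge0$ for every $x$ that is reachable at nonpositive $d^0$ distance from some $r\in R$, in the sense that $-\phi(s)\ge\ldots$;
\item $d^1_\phi(r,t)\ge 0$ for $r\in R$.
\end{enumerate}

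Now fix $(s,t)$ and consider the negative vertices $x$ that are between $s$ and $t$ after reweighting. The key claim is the following: if some $r\in R$ is such that $r$ lies between $s$ and $x$ in the original order, then $x$ is no longer between $s$ and $t$. This is an ordering argument. We sort the candidate vertices $x$ by $d^0(s,x)+d^1(x,t)$, or by an appropriate key, and show that after reweighting, a vertex $x$ survives only if no sampled $r$ precedes it in this order. The ordering must be chosen so that it does not depend on $R$. If more than $k/b$ vertices are between $s$ and $t$, then the $k/b$ first ones in the order contain a sample with probability at least $1-n^{-c}$. A union bound over the $n^2$ pairs then finishes the argument.

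\textbf{Main obstacle.} The main obstacle is choosing the exact form of $\phi$ so that the ordering argument works with one-hop suffixes: $d^1$ after $r$, versus $d^0$ before. The approach we would try first is
\[
\phi(v)=\min_{r\in R}\min\bigl(d^0(r,v),\,d^1(r,v)\bigr)\wedge 0,
\]
and we would order the candidate vertices $x$ by the quantity $d^0(s,x)$. We would then check that sampling an $r$ which is a near-minimizer forces $d^0_\phi(s,x)+d^1_\phi(x,t)\ge0$ for all later $x$.

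The running time is the single SSSP call plus $O(m\log n)$ for building the graph and for Dijkstra-style bookkeeping.
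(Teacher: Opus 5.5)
Your probabilistic skeleton is right: sample $R\subseteq N$ of size $\Theta(b\log n)$, sort the negative vertices by a key independent of $R$, and take a union bound over the $n^2$ pairs. The gap is the potential you actually commit to, which fails, and that is the heart of the lemma.

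Any valid $\phi$ keeps the edges of $0$-hop paths non-negative, so $d^0_\phi(s,x)\ge 0$ is automatic. What the argument needs is $d^1_\phi(x,t)\ge 0$ for every sampled $x$ and every $t$, i.e.\ $\phi(t)\le \phi(x)+d^1(x,t)$. The closed form $\phi(v)=\min\bigl(0,\min_{r\in R}d^0(r,v),\min_{r\in R}d^1(r,v)\bigr)$ does not satisfy this. The reason is that $\phi(x)$ can itself come from a one-hop path out of another sample, and this is never propagated further.

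For instance, take negative edges $(r,r')$ and $(x,x')$ of weight $-10$, zero-weight edges $(r',x)$ and $(x',t)$, and $R=N=\{r,x\}$. Then $\phi(x)=d^1(r,x)=-10$ and $\phi(t)=d^1(x,t)=-10$, since there is no one-hop $r\leadsto t$ path. Hence $d^0_\phi(x,x)+d^1_\phi(x,t)=-10+\phi(x)-\phi(t)=-10<0$: the sampled vertex $x$ is still between $(x,t)$ even though every negative vertex was sampled.

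Separately, the key you settle on in your last paragraph, $d^0(s,x)$, is the wrong one. Only $d^0(s,x)+d^1(x,t)$ is shifted by the $x$-independent amount $\phi(s)-\phi(t)$ under reweighting. That invariance is what lets the sampled minimizer dominate every later vertex.

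The missing idea is that $\phi$ must be a genuine shortest-path distance in an auxiliary graph. That graph must contain, for all $u,v\in V$ and $x\in R$, a $u\leadsto v$ path of weight $d^0(u,x)+d^1(x,v)$. The triangle inequality then gives $\phi(v)-\phi(u)\le d^0(u,x)+d^1(x,v)$ for all $u,v$. Chains through several samples are handled automatically; in the example this forces $\phi(t)=-20$. This closure is exactly why the lemma spends an SSSP call rather than $|R|$ one-hop searches.

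Your layered graph does not realize this within budget. $G^{(1)}$ is a full copy of $G$, so the instance has all $k$ negative vertices. Deleting the negative edges whose tail is outside $R$ instead destroys the $d^1(x,\cdot)$ paths whose negative edge is not $x$'s own. The layered graph is also never connected to the ordering argument.

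The construction the paper adapts (see the proof of \Cref{lem:strong-bw-reduction}) works as follows:
\begin{itemize}
\item It uses a constant number of layered copies of $G^+$.
\item Self-edges and copies of the negative edges are placed between consecutive layers, shifted by $+M$ so they are non-negative.
\item Each sampled vertex gets a single turnaround edge of weight $-cM$, where $c$ is the number of $+M$ edges on a loop through the layers.
\end{itemize}
The turnaround edges are the only negative edges, giving $O(b\log n)$ negative vertices and $O(m)$ edges. Any negative cycle must traverse the layers in order, so the $M$'s cancel and it maps to a negative closed walk in $G$. Taking $\phi(v)=d_H(V_H,v_0)$ then yields the inequality above, hence $d^1_\phi(x,\cdot)\ge 0$ for $x\in R$, and your ordering argument (with the correct key) goes through.
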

	
	In the resulting graph with low betweenness, they then compute forward and backward Dijkstra searches from each negative edge. They show that these searches can be implemented simultaneously for all negative vertices efficiently in a graph with low betweenness. 
	
	\begin{lemma}[Lemma 8 of \cite{li2025shortcutting}]\label[lemma]{lem:dijkstra-both-ways}
		There is an algorithm that, for any given negative vertex $r\in V$, computes a number $\Delta_r$ and two sets $V^{out}_r$ and $V^{in}_r$ such that
		\begin{enumerate}
			\item $d^1(r,v)\le-\Delta_r$ for all $v\in V^{out}_r$,
			\item $d^1(r,v)\ge-\Delta_r$ for all $v\not\in V^{out}_r$,
			\item $d^0(v,r)\le\Delta_r$ for all $v\in V^{in}_r$,
			\item $d^0(v,r)\ge\Delta_r$ for all $v\not\in V^{in}_r$,
			\item Either (1) or (3) is satisfied with strict inequality, and
			\item $\big||V^{out}_r|-|V^{in}_r|\big|\le 1.$
		\end{enumerate}
		The algorithm runs in time $O((|V^{out}_r|+|V^{in}_r|)^2+(|V^{out}_r|+|V^{in}_r|)\log{n})$. Moreover, the algorithm can output the values of $d^1(r,v)$ for all $v\in V^{out}_r$ and $d^0(v,r)$ for all $v\in V^{in}_r$.
	\end{lemma}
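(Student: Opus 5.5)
We run two searches from $r$ in lockstep: a forward search computing $d^1(r,\cdot)$ in increasing order, and a backward Dijkstra computing $d^0(\cdot,r)$ in increasing order. We stop at the first threshold where the two settled sets can be balanced. Concretely, let the forward search settle vertices $v^{out}_1,v^{out}_2,\dots$ in nondecreasing order of $d^1(r,\cdot)$, so the most negative values come first. Let the backward search settle $v^{in}_1,v^{in}_2,\dots$ in nondecreasing order of $d^0(\cdot,r)$. Writing $a_i=-d^1(r,v^{out}_i)$ (nonincreasing) and $b_j=d^0(v^{in}_j,r)$ (nondecreasing), we alternate steps and, after $i$ forward and $j$ backward settles, test whether some $\Delta$ separates them. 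We want $a_i\ge \Delta\ge b_j$ with the next unsettled values satisfying $a_{i+1}\le\Delta\le b_{j+1}$. Since $a$ decreases and $b$ increases, the crossing point where $a_i$ drops to the level of $b_j$ exists. We then set $\Delta_r$ to a value between them (e.g.\ $\Delta_r=b_j$ or $a_i$) and take $V^{out}_r=\{v^{out}_1,\dots,v^{out}_i\}$ and $V^{in}_r=\{v^{in}_1,\dots,v^{in}_j\}$. Alternating steps keeps $|i-j|\le1$, which gives property 6. Properties 1--4 follow from the monotone settling order, and we choose which side takes the strict inequality at the tie to get property 5. If one search runs out of vertices, its remaining values are $+\infty$ (respectively $-\infty$), and the stopping rule still applies.

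The key implementation issue is computing $d^1(r,\cdot)$ in sorted order using only $O(s^2+s\log n)$ time, where $s=|V^{out}_r|+|V^{in}_r|$. For the backward search, $d^0$ is a plain Dijkstra on non-negative edges (the reweighting from \cref{lem:betweenness-reduction} makes all non-frozen edges nonnegative). For $d^1$ we use the structure from \cref{lem:one-negative-outgoing-edge}: $r$ has a unique negative out-edge. A 1-negative path from $r$ therefore either uses that edge first and then only nonnegative edges, or uses some other negative edge later. The forward search runs a Dijkstra-like process in a two-layer graph (layer 0 before using a negative edge, layer 1 after), with keys $d^1$. Because keys can be negative, ordinary Dijkstra's settling guarantee needs care. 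Here we use the single negative edge out of $r$ and the fact that $d^0(r,\cdot)\ge0$ on layer 0 under the frozen reweighting. As a result, the layer-1 values are extracted in nondecreasing order once all layer-0 vertices below the current threshold have been relaxed.

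The running-time step I expect to be the main obstacle is bounding work by $s$ rather than by degrees. The claim is that we only ever scan edges among settled vertices, so the graph must be dense-representable: we charge relaxations to pairs of settled vertices, giving $O(s^2)$. To do this, at each step we choose the next candidate by scanning the adjacency matrix row restricted to the settled set, rather than all out-edges. Alternatively, we relax all out-edges but keep only the heap minimum with lazy insertion. I would try the first approach: maintain tentative distances only for vertices adjacent to settled ones via the matrix, and extract the minimum among them with a heap, costing $O(\log n)$ per settle. The subtle part is that a vertex not yet settled may have many candidate predecessors. I would handle this by relaxing from each newly settled vertex only to its out-neighbors, but this costs degree, not $s$. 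So I expect the true argument to need an additional trick, such as sorted adjacency lists that let each settled vertex lazily offer just its next cheapest out-edge. With that trick, each settle adds $O(1)$ heap items plus $O(s)$ bookkeeping, which gives the stated $O(s^2+s\log n)$ bound.
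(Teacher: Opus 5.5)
Your skeleton matches the paper's own argument, which it gives for the $h$-hop version \Cref{lem:forward-backward}; at $h=0$ that lemma is exactly this statement. The shared skeleton is two Dijkstra searches advanced alternately, stopping the first time the next unsettled values satisfy $a_{i+1}\le b_{j+1}$ (the paper's test $d_{in}+d_{out}\ge\phi_{out}(r_0)+\phi_{in}(r_0)$), with lazy insertion over weight-sorted adjacency lists. The genuine gap is the forward search. A path from $r$ with at most one negative edge cannot spend that edge ``later.'' The construction in \Cref{lem:one-negative-outgoing-edge} leaves each negative vertex $r$ with a \emph{single} out-edge $(r,r')$. So every path leaving $r$ uses its one negative hop immediately, and $d^1(r,\cdot)$ is plain Dijkstra on $G^+$ seeded with $r$ at key $0$ and $r'$ at key $w(r,r')$. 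This is valid because the only negative edge leaves the source, which is exactly how the paper justifies its search in $H_{out}$. Your two-layer argument does not supply this. If layer~$0$ were nontrivial, a later negative edge $(u,u')$ of arbitrarily large magnitude would make $d^0(r,u)+w(u,u')$ smaller than keys already extracted. No threshold on layer-$0$ distances tells you which $u$ must still be explored. So neither the sorted extraction order nor a running time bounded in terms of $s$ would follow.

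Two further steps fail as written. You correctly require $\max(a_{i+1},b_j)\le\Delta_r\le\min(a_i,b_{j+1})$, but your suggested values violate it:
\begin{itemize}
\item $\Delta_r=b_j$ breaks property~2 whenever $a_{i+1}>b_j$;
\item $\Delta_r=a_i$ breaks property~4 whenever $b_{j+1}<a_i$.
\end{itemize}
The correct choice, as in the paper, is the next unsettled value of the search that was \emph{not} advanced last: $\Delta_r=b_{j+1}$ if the last step was forward, and $\Delta_r=a_{i+1}$ if it was backward. Non-termination at the previous step gives $a_i>b_{j+1}$ (resp.\ $a_{i+1}>b_j$). This supplies both the remaining inequality $b_j\le\Delta_r\le a_i$ and the strictness in property~1 (resp.\ 3). ``Choosing which side is strict at a tie'' is not an argument: if $a_i=b_j$ could occur, no $\Delta_r$ would satisfy property~5, and it is the previous non-termination that rules this out.

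Finally, the runtime accounting is off. Settling $v$ forces every settled vertex whose current offer targets $v$ to advance its pointer. That can mean $\Theta(s)$ re-offers per settle and $O(s^2)$ overall, not $O(1)$ per settle. To get $O(s^2+s\log n)$, each re-offer must cost $O(1)$. Options are Fibonacci-heap decrease-key keyed by target vertex, as in the paper, or simply rescanning the at most $s$ current offers each step. A binary heap with stale entries would instead give $O(s^2\log n)$.
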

	
	\begin{lemma}[Lemma 9 of \cite{li2025shortcutting}]\label[lemma]{lem:total-size-bound}
		Under the betweenness reduction guarantee of \Cref{lem:betweenness-reduction}, we have $\sum_{r\in V}(|V_r^{out}|+|V_r^{in}|)^2\le O(kn^2/b)$ and $\sum_{r\in V}(|V_r^{out}|+|V_r^{in}|)\le O(kn/\sqrt b)$.
	\end{lemma}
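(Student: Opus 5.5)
The plan is to charge each set $V_r^{out}\cup V_r^{in}$ to the pairs $(s,t)$ that have $r$ as a weak-betweenness witness, and then use the bound of $k/b$ witnesses per pair from \Cref{lem:betweenness-reduction}. Fix a negative vertex $r$ and let $a_r=|V_r^{out}|$ and $c_r=|V_r^{in}|$. By item 6 of \Cref{lem:dijkstra-both-ways}, $|a_r-c_r|\le 1$, so $(a_r+c_r)^2\le O(a_rc_r+1)$.

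The main step is to show that for every $s\in V_r^{in}$ and every $t\in V_r^{out}$, the vertex $r$ is counted in the weak betweenness of $(s,t)$. We have $d^0(s,r)\le\Delta_r$ from item 3 and $d^1(r,t)\le-\Delta_r$ from item 1, so $d^0(s,r)+d^1(r,t)\le 0$. Item 5 says one of the two inequalities is strict, so the sum is strictly negative. Hence $r$ witnesses the pair $(s,t)$.

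Now sum over $r$. The number of triples $(s,r,t)$ with $s\in V_r^{in}$ and $t\in V_r^{out}$ is at most the sum over the $n^2$ pairs $(s,t)$ of their weak betweenness, which is at most $n^2\cdot k/b$. So $\sum_r a_rc_r\le kn^2/b$. Only the $k$ negative vertices have nonempty sets, so the additive $+1$ terms contribute $O(k)$. This term is dominated by $kn^2/b$ whenever $b\le n^2$; that is the intended parameter range, since otherwise the guarantee is trivial. This gives the first bound, $\sum_r(a_r+c_r)^2\le O(kn^2/b)$.

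The second bound follows from Cauchy--Schwarz over the $k$ negative vertices:
\[
\sum_r (a_r+c_r)\le \sqrt{k\sum_r(a_r+c_r)^2}\le \sqrt{k\cdot O(kn^2/b)}=O(kn/\sqrt b).
\]

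The main obstacle is the strictness issue in the charging step. It is needed because weak betweenness requires $<0$, and item 5 handles it. A secondary point is the bookkeeping of the $+1$ slack from item 6 when one of the two sets is empty, which costs only $O(k)$ in total.
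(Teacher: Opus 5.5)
Your proof is correct and follows essentially the same argument the paper gives for its strong-betweenness analogue, \Cref{lem:V-out-in-size}. There, too, each triple $(s,r,t)$ with $s\in V_r^{in}$, $t\in V_r^{out}$ is charged to $r$'s membership in the betweenness of $(s,t)$ (item 5 supplying strictness), and item 6 converts $\sum_r |V_r^{in}|\,|V_r^{out}|$ into the squared bound with $O(k)$ slack; your Cauchy--Schwarz step for the linear bound matches the Jensen argument the paper uses in Step 1 of \Cref{lem:shortcut-good-paths}.
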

	
	Choosing $b=k/\text{poly}\log{n}$, \Cref{lem:dijkstra-both-ways,lem:total-size-bound} implies that we can compute sets $V_r^{out}$ and $V_r^{in}$ for all $r\in N$ in time $\tilde{O}(n^2)$. Using these sets, the algorithm \emph{shortcuts} the graph by adding \emph{Steiner vertices} with the following edges to the graph:
	\begin{enumerate}
		\item For each negative vertex $r\in V$, create a new vertex $\tilde r$.
		\item For each $r\in V$, $v\in V_r^{out}\cup\{r\}$, and $w\in N^{out}(v)$, add the edge $(\tilde r,w)$ of weight $d^1(r,v)+\Delta_r+w(v,w)$ \emph{only if it is non-negative},\label{item:shortcut-2}
		\item For each $r\in V$, $v\in V_r^{in}\cup\{r\}$, and $u\in N^{in}(v)$, add the edge $(u,\tilde r)$ of weight $w(u,v)+d^0(v,r)-\Delta_r$ \emph{only if it is non-negative},\label{item:shortcut-3}
		\item For each negative edge $(r,r')$, $u\in V_r^{out}\cup\{r\}$, if there is a (unique) negative out-edge $(u,v)$ of $u$, add the edge $(r,v)$ of weight $d^1(r,u)+w(u,v)$, and\label{item:shortcut-4}
		\item For each negative edge $(r,r')$, $v\in V_r^{in}\cup\{r\}$, if there is a (unique) negative in-edge $(u,v)$ of $v$, add the edge $(u,r')$ of weight $w(u,v)+d^0(v,r)+w(r,r')$.\label{item:shortcut-5}
	\end{enumerate}
	Uniqueness of negative out-edges and in-edges is guaranteed by preprocessing (\Cref{lem:one-negative-outgoing-edge}), and steps (4) and (5) can be implemented in $2\sum_{r\in V}(|V_r^{out}|+|V_r^{in}|)=\tilde{O}(n^{1.5})$ time. The bottleneck of the shortcutting algorithm is steps (2) and (3), which takes up to $2\sum_{r\in V}(|V_r^{out}|+|V_r^{in}|)n=\tilde{\Theta}(n^{2.5})$ time. First, we sketch that the shortcutting procedure above reduces the number of hops in shortest paths by a constant factor. This will motivate our faster construction.
	
	\begin{figure}[t]\centering
		\includegraphics[scale=.8]{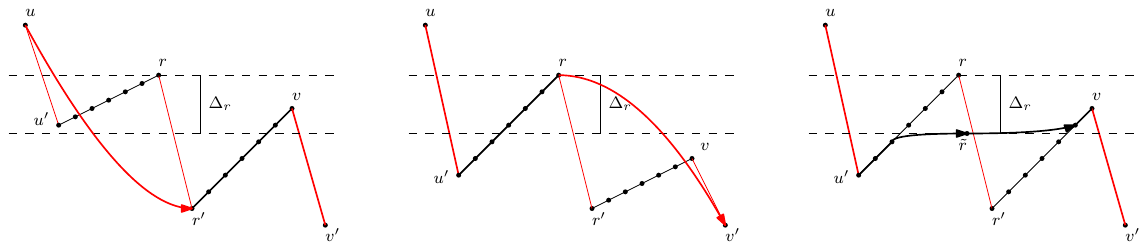}
		\caption{The three cases in the proof of \Cref{lem:shortcut}, where height indicates cumulative distance to each vertex. Negative edges are marked in red, and the new shortcut path is marked in bold.
		}\label{fig:shortcut2}
	\end{figure}
	
	\begin{lemma}\label{lem:shortcut}
		Consider any $s,t\in V$ and a shortest $(s,t)$-path $P$ with $h$ negative edges. After the shortcutting procedure above, there is an $(s,t)$-path with at most $h-\lfloor h/3\rfloor$ negative edges, and whose weight is at most that of $P$.    
	\end{lemma}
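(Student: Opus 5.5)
We will show that every three consecutive negative edges of $P$ can be replaced by a path using at most two negative edges, without increasing weight. Partition the $h$ negative edges of $P$ into $\lfloor h/3\rfloor$ disjoint consecutive triples, plus leftovers. For a triple $(u_1,r_1),(u_2,r_2),(u_3,r_3)$ (with $u_i$ the tail, a negative vertex, and the next one appearing after $r_i$), let $P_{12}$ be the subpath from $u_1$ to $u_2$ and $P_{23}$ the subpath from $u_2$ to $u_3$. Both contain exactly one negative edge. Since $P$ is a shortest path, subpaths are shortest, so these weights equal the corresponding hop-bounded distances. Concretely, the prefix of $P$ from $u_2$ to any $v$ before $u_3$ has weight $d^1(u_2,v)$, and the portion from any $v$ after $r_1$ up to $u_2$ has weight $d^0(v,u_2)$. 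The central vertex is $r:=u_2$, and we case split using its threshold $\Delta_r$ and the sets $V_r^{in},V_r^{out}$ from \Cref{lem:dijkstra-both-ways}.

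\emph{Case 1:} the segment of $P$ leaving $r$ stays within $V_r^{out}\cup\{r\}$ all the way to $u_3$, so $u_3\in V_r^{out}\cup\{r\}$. Then step (\ref{item:shortcut-4}) added the edge $(r,r_3)$ of weight $d^1(r,u_3)+w(u_3,r_3)$. This edge is negative, but it replaces the two negative edges $(r,r_2),(u_3,r_3)$ by one, at equal weight.

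\emph{Case 2:} the segment of $P$ from $r_1$ to $r$ lies entirely in $V_r^{in}\cup\{r\}$. Then step (\ref{item:shortcut-5}) added $(u_1,r_2)$ of weight $w(u_1,r_1)+d^0(r_1,r)+w(r,r_2)$, again merging two negative edges into one.

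\emph{Case 3:} some vertex on the $r_1\to r$ segment lies outside $V_r^{in}$, and some vertex on the $r\to u_3$ segment (after $r_2$) lies outside $V_r^{out}$. Let $v$ be the last vertex before $r$ outside $V_r^{in}$, and let $v'$ be its successor on $P$, so $v'\in V_r^{in}\cup\{r\}$. Similarly, let $x$ be the first vertex after $r$ outside $V_r^{out}$, and let $x'$ be its predecessor, so $x'\in V_r^{out}\cup\{r\}$. Neither $(v,v')$ nor $(x',x)$ is the negative edge $(r,r_2)$: the first lies before $r$, and the second cannot be $(r,r_2)$ because if $r_2\notin V^{out}_r$ the case conditions still allow $x'=r$, so this needs care. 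They are also not the other negative edges in the interior segments, since those segments have no negative edges except $(r,r_2)$. Thus both $(v,v')$ and $(x',x)$ are nonnegative edges.

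Steps (\ref{item:shortcut-3}) and (\ref{item:shortcut-2}) give the candidate edges $(v,\tilde r)$ of weight $w(v,v')+d^0(v',r)-\Delta_r$ and $(\tilde r,x)$ of weight $d^1(r,x')+\Delta_r+w(x',x)$. Their sum equals the weight of the subpath $v\to r\to x$ of $P$. It remains to check that both edges are nonnegative, so that they were actually added. By property 4, $d^0(v,r)\ge\Delta_r$, and $w(v,v')+d^0(v',r)=d^0(v,r)$ by shortestness, so the first edge is nonnegative. By property 2, $d^1(r,x)\ge-\Delta_r$, and similarly the second edge is nonnegative. The replacement removes the negative edge $(r,r_2)$ and adds no negative edge, so we save one.

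In every case, each triple loses at least one negative edge and the path weight does not increase. The modifications for different triples touch disjoint subpaths, because each is confined between $u_1$ and $r_3$ of its own triple. Hence we obtain at most $h-\lfloor h/3\rfloor$ negative edges.

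The main obstacle is the boundary subtleties in Case 3. One must make sure that $x$ exists, i.e., the vertices on the $r\to u_3$ segment other than $r$ itself all belong to $V_r^{out}$ exactly when $u_3$ does (a prefix property following from properties 1--2 along a shortest path). One must also handle $x'=r$ or $v'=r$. This is where property 5 (strict inequality) would be used, to rule out a tie making both cases fail. Similarly, defining Cases 1 and 2 via monotonicity along $P$ is needed so that the three cases are exhaustive.
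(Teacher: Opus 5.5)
Your proof is correct and is essentially the paper's argument. You merge two negative edges into one via step (4) or (5), and otherwise route through $\tilde r$ from the last vertex before $r$ outside $V_r^{in}$ to the first vertex after $r$ outside $V_r^{out}$; the only difference is that you split cases by membership in $V_r^{in},V_r^{out}$ rather than by the thresholds $d^0(r_1,r)<\Delta_r$ and $d^1(r,u_3)<-\Delta_r$.

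The worries in your last paragraph are unfounded, and your proof already resolves them:
\begin{itemize}
\item $v$ and $x$ exist simply because Cases 2 and 1 fail, so no prefix property is needed.
\item The case $x'=r$ is covered, since $d^1(r,x')+w(x',x)=d^1(r,x)$ holds by shortestness even when $(x',x)=(r,r_2)$ (using $d^1(r,r)=0$). Your sentence calling $(x',x)$ nonnegative is therefore false, but it is never used.
\item Property 5 plays no role in this lemma.
\end{itemize}
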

	\begin{proof}[Proof Sketch.]
		Consider the negative edges of $P$ in the order from $s$ to $t$. Create $\lfloor h/3\rfloor$ disjoint groups of three consecutive negative edges; we will shortcut each group using at most two negative edges, so that the shortcut path has at most $h-\lfloor h/3\rfloor$ negative edges. Let $(u,u')$, $(r,r')$, and $(v,v')$ be such a group of negative edges, in that order. There are three cases:
		\begin{enumerate}
			\item $d^0(u',r)<\Delta_r$. In this case, \Cref{lem:dijkstra-both-ways} ensures that $u'\in V_r^{in}$, so step~(\ref{item:shortcut-5}) adds the edge $(u,r')$ of weight $w(u,u')+d^0(u',r)+w(r,r')$, which is exactly the weight of the $u\leadsto r'$ segment. Thus, we shortcut the $u\leadsto r'$ segment of $P$ via the single (negative) edge $(u,r')$.
			\item $d^1(r,v)<-\Delta_r$. In this case, \Cref{lem:dijkstra-both-ways} ensures that $v\in V_r^{out}$, so step~(\ref{item:shortcut-4}) adds the edge $(r,v')$ of weight $d^1(r,v)+w(v,v')$, which is exactly the weight of the $r\leadsto v'$ segment. Thus, we shortcut the $r\leadsto v'$ segment of $P$ via the single (negative) edge $(r,v')$. 
			\item $d^0(u',r)\ge\Delta_r$ and $d^1(r,v)\ge-\Delta_r$. Consider the $u'\leadsto r$ segment of $P$, and let $u''$ be the last vertex on this segment with $d^0(u'',r)\ge\Delta_r$. For simplicity, assume that $u''\ne r$, since otherwise the argument is trivial. Then the vertex after $u''$ exists and is in $V_r^{in}$, so the edge $(u'',\tilde r)$ is added with weight $d^0(u'',r)-\Delta_r\ge0$ on step~(\ref{item:shortcut-3}). Similarly, consider the $r'\leadsto v$ segment of $P$, and let $v''$ be the first vertex on this segment with $d^1(r,v'')\ge-\Delta_r$. Then the vertex before $v''$ is in $V_r^{out}$, so the edge $(\tilde r,v'')$ is added with weight $d^1(r,v'')+\Delta_r\ge0$ on step~(\ref{item:shortcut-2}). The $u''\to\tilde r\to v''$ path consists of two non-negative edges and has weight $d^0(u'',r)+d^1(r,v'')$, which is exactly the weight of the $u''\leadsto r\leadsto v''$ segment of $P$. Thus, we can shortcut this segment of $P$ by the path $u''\to\tilde r\to v''$ with no negative edges.\qedhere
		\end{enumerate}
	\end{proof}
	
	Cases (1) and (2) of the proof above only use shortcut edges added in steps~(\ref{item:shortcut-4}) and (\ref{item:shortcut-5}). Since steps~(\ref{item:shortcut-4}) and (\ref{item:shortcut-5}) can already be implemented in $\tilde{O}(n^{1.5})$ time, these are the easy cases. We now focus on case (3), which requires the shortcut edges added in steps~(\ref{item:shortcut-2}) and (\ref{item:shortcut-3}).
 
	First, suppose that we add \emph{imaginary} edges $(\tilde{r},v)$ for each $v\in V^{out}_r$ with weight $\omega'(\tilde{r},v)=d^1(r,v)+\Delta_r$ and imaginary edges $(v,\tilde{r})$ with weight $\omega'(v,\tilde{r})=d^0(v,r)-\Delta_r$ for each $v\in V^{in}_r$. We use weight function $\omega'$ to distinguish imaginary edges. The key properties needed in the proof which are guaranteed by adding the shortcut edges in steps (\ref{item:shortcut-2}) and (\ref{item:shortcut-3}) are: 
	\begin{itemize}
		\item for each $r\in N$, each imaginary edge $(\tilde{r},v)$, and each out-edge $(v,w)$ of $v$, \\ if $\omega'(\tilde{r},v)+\omega(v,w)\ge 0$, then step~(\ref{item:shortcut-2}) adds a non-negative shortcut edge $(\tilde{r},w)$.
		\item for each $r\in N$, each imaginary edge $(v,\tilde{r})$, and each in-edge $(u,v)$ of $v$, \\ if $\omega(u,v)+\omega'(v,\tilde{r})\ge 0$, then step~(\ref{item:shortcut-3}) adds a non-negative shortcut edge $(u,\tilde{r})$.
	\end{itemize}
	Consider the two-edge paths $\tilde{r}\to v\to w$ and $u\to v\to\tilde{r}$ above, which consist of one imaginary edge. We call them \emph{locally-negative paths} since even though the imaginary edge is negative, its effect on the shortest paths is only local. Steps (\ref{item:shortcut-2}) and (\ref{item:shortcut-3}) achieve the property of shortcutting locally-negative paths using paths of length 1 by brute-forcing over the entire vertex neighborhood, which takes $\tilde{O}(n^{2.5})$ time. We will sketch a construction which, by adding $2n$ additional Steiner nodes to the graph, achieves these same shortcutting guarantees, but in $\tilde{O}(n^2)$ time.
	
	In this sketch, we assume $|V^{in}_r|+|V^{out}_r|\le\kappa=\tilde{O}(\sqrt{n})$ for each negative vertex $r\in N$ for a simpler presentation. Intuitively, this is guaranteed on average by applying betweenness reduction with parameter $b=k/\text{poly}\log{n}$. We will only focus on achieving the property for in-searches, since out-searches are symmetric. The idea is to add a new vertex $v_{in}$ for each (non-negative) vertex $v\in V$ to help in constructing the non-negative paths from $u$ to $\tilde{r}$. Let $\delta(v_{in})=\min_{r\in N:v\in V^{in}_r}\omega'(v,\tilde{r})$; we add the following edges to $v_{in}$ if they are non-negative:
	\begin{itemize}
		\item for every imaginary edge $(v,\tilde{r})$, add an edge $(v_{in},\tilde{r})$ with weight $\omega'(v,\tilde{r})-\delta(v_{in})$,
		\item for every non-negative edge $(u,v)$, add an edge $(u,v_{in})$ with weight $\omega(u,v)+\delta(v_{in})$.
	\end{itemize}
	Intuitively, $v_{in}$ is a \emph{copy} of $v$ which is shifted up by $\delta(v_{in})$, with resulting negative edges removed.

	We first attempt to shortcut the locally-negative path $u\to v\to \tilde{r}$ using path $u\to v_{in}\to \tilde{r}$. Since $v\in V^{in}_r$, we know $\delta(v_{in})\le \omega'(v,\tilde{r})$ by definition of $\delta(v_{in})$, implying that $\omega(v_{in},\tilde{r})\ge0$. If $\omega(u,v_{in})=\omega(u,v)+\delta(v_{in})\ge0$ as well, we can shortcut using $u\to v_{in}\to\tilde{r}$. Unfortunately, $\omega(u,v)+\delta(v_{in})$ may be negative sometimes, so just adding the copies is insufficient. The final insight is that if $\omega(u,v)+\delta(v_{in})<0$, one can show that there is some $r'\in N$ such that $u\in V^{in}_{r'}$. Using this property, we can brute force over all remaining locally-negative paths in $\tilde{O}(n^2)$ time: for each $r\in N$, each $v\in V^{in}_r$, and each $u\in V^{in}_{r'}$ for $r'\in N$ defined as above, if $\omega(u,v)+\omega'(v,\tilde{r})\ge0$, add an edge $(u,\tilde{r})$ of weight $\omega(u,v)+\omega'(v,\tilde{r})$. This shortcuts all remaining locally-negative paths, and the brute-force takes $O(k\cdot\kappa^2)\le \tilde{O}(n^2)$. 
	
	Unfortunately, since the number of vertices more than doubles every iteration, naively iterating this shortcutting procedure $O(\log{n})$ times would blow up the number of nodes in the graph to $\Omega(n^2)$. We achieve our main result by designing a similar shortcutting approach which adds $O(\log{n})$ copies of each original vertex per iteration, so the total number of vertices after iterating is $O(n\log^2{n})$. Along the way, we develop a stronger version of \Cref{lem:betweenness-reduction,lem:dijkstra-both-ways} which applies to $h$-hop paths. In particular, we believe our stronger betweenness reduction guarantees (see \Cref{sec:betweenness}) may be of independent interest for future improvements in the sparse case.
	
	\section{Shortcutting Algorithm}
	\label{sec:algorithm}
	
	As outlined in the technical overview, our algorithm is an iterative shortcutting algorithm, which proceeds over $L=\Theta(\log{k})$ iterations. In each iteration, we add some nodes and edges to the graph in order to decrease the number of negative edges on shortest paths. Let $G_{\tau}=(V_{\tau},E_{\tau})$ denote the graph after ${\tau}$ iterations of shortcutting, where $G_0=(V_0,E_0)$ is the original graph. Though this graph may have a larger vertex set $V_{\tau}\supseteq V_0$, all additional vertices are \emph{copies} of some (non-negative) vertex in $V_0$, as formalized in \Cref{sec:copy}. We prove the following:
	
	\begin{theorem}\label{thm:one-iteration-shortcutting}
		Let $G_0=(V_0,E_0)$ be the original (directed, weighted) graph and let $G_{\tau}=(V_{\tau},E_{\tau})$ be a (directed, weighted) graph such that each vertex is a copy of some vertex in $G_0$ and there are at most $c=O(\log^2{n})$ total copies of each vertex. There is a randomized algorithm that outputs a supergraph $G_{\tau+1}=(V_{\tau+1},E_{\tau+1})$ with $V_{\tau+1}\supseteq V_{\tau}$ and a potential $\phi_{\tau}$ on $V_{\tau}$ such that
		\begin{enumerate}
			\item For any $s,t\in V_{\tau}$, we have $d_{G_{\tau+1}}(s,t)=d_{G_{\tau}}(s,t)+\phi_{\tau}(s)-\phi_{\tau}(t)$,
			\item For any $s,t\in V$ and hop bound $h$, we have $d_{G_{\tau+1}}^{h-\lfloor h/3\rfloor}(s,t)\le d_{G_{\tau}}^h(s,t)+\phi_{\tau}(s)-\phi_{\tau}(t)$,
			\item $V_{\tau+1}$ has at most $O(\log{n})$ additional copies of each vertex, so $|V_{\tau+1}|\le |V_{\tau}|+O(n\log{n})$, 
			\item There are no new negative vertices.
		\end{enumerate}
		The algorithm uses one call to negative weight shortest paths on $O(|V_{\tau}|)$ nodes and $O(k/2^{\sqrt{\log{n}}})$ negative vertices, and takes $n^{2+o(1)}$ additional time.
	\end{theorem}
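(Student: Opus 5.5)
We would prove \Cref{thm:one-iteration-shortcutting} by following the shortcutting template of \Cref{lem:shortcut}, with two changes: a cheaper construction of the Steiner edges for case (3), and a stronger betweenness reduction. The plan has four steps.

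\textbf{Step 1: reweighting.} First we freeze the negative edges of $G_\tau$. Then we apply betweenness reduction (\Cref{lem:betweenness-reduction}) with $b=k/2^{\sqrt{\log n}}$, which yields the potential $\phi_\tau$ and gives the single recursive call on $O(|V_\tau|)$ nodes with $O(k/2^{\sqrt{\log n}})$ negative vertices. Because copies are only shifted versions of base vertices, we intend to run the reduction on base vertices and propagate $\phi_\tau$ to each copy $v_i$ via $\phi_\tau(v_0)$. Under \ref{item:graph-invariant-2}, this should let betweenness be measured on the base graph, keeping the factor $c=O(\log^2 n)$ harmless.

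Property 1 then follows if every added edge $(x,y)$ has weight at least $d_{G_\tau}(x,y)$ under $w_{\phi_\tau}$. Shortcut edges from steps (\ref{item:shortcut-4}) and (\ref{item:shortcut-5}) and the edges into $\tilde r$ are weights of actual walks. Copy edges are handled by \Cref{cl:add-copy}. For a new base vertex $\tilde r$, we verify the invariants directly: its in- and out-edges realize $d^0(\cdot,r)-\Delta_r$ and $d^1(r,\cdot)+\Delta_r$, so $\tilde r$ acts as a copy of $r$ shifted by $-\Delta_r$. Property 4 holds because every new edge is either non-negative (Steiner and copy edges) or out of an existing negative vertex (steps (\ref{item:shortcut-4}) and (\ref{item:shortcut-5})).

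\textbf{Step 2: Dijkstra searches.} For each $r\in N$ we run the two-sided search of \Cref{lem:dijkstra-both-ways}. By \Cref{lem:total-size-bound}, the total work is $O(kn^2/b)=n^{2+o(1)}$, and the sets satisfy $\sum_r |V_r^{in}|+|V_r^{out}|\le kn/\sqrt b$. We then add steps (\ref{item:shortcut-4}) and (\ref{item:shortcut-5}) directly, and add the imaginary edges $(v,\tilde r)$ and $(\tilde r,v)$.

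\textbf{Step 3: replacing steps (\ref{item:shortcut-2}) and (\ref{item:shortcut-3}) with $O(\log n)$ copies per vertex.} This is the main obstacle; we describe the in-side, and the out-side is symmetric. The technical overview uses a single copy $v_{in}$ with shift $\delta(v_{in})=\min_r \omega'(v,\tilde r)$, which loses $\tilde r$'s whose imaginary weights are much larger than this minimum. Instead, we bucket the values $\omega'(v,\tilde r)$ for $r$ with $v\in V_r^{in}$ by rank into geometric classes: sort them, and for $j=0,\dots,\log n$ create a copy $v^{(j)}$ whose shift is the $2^j$-th smallest value. Each copy gets edges to those $\tilde r$ whose rank is at least $2^j$ (these are non-negative), plus in-edges constructed by \Cref{cl:add-copy}.

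Now consider a locally-negative path $u\to v\to\tilde r$, and pick the largest $j$ whose shift is at most $\omega'(v,\tilde r)$. If $\omega(u,v)+\delta(v^{(j)})\ge 0$, the path $u\to v^{(j)}\to\tilde r$ works. Otherwise $\omega(u,v)+\delta(v^{(j)})<0$, and we argue that $u\in V_{r'}^{in}$ for at least $2^j$ of the $r'$, namely all those below $v$'s $2^j$-th value. This follows from property 4 of \Cref{lem:dijkstra-both-ways}, since $d^0(u,r')\le \omega(u,v)+d^0(v,r')<\Delta_{r'}$. We then charge the brute-force pair $(u,\tilde r)$ to these $2^j$ witnesses. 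In that pair, $\tilde r$ ranges over the at most $2^{j+1}$ values in bucket $j$. The total brute-force work is therefore $\sum_{r'}|V^{in}_{r'}|\cdot\max_v|V^{in}_{r'}\cap N^{in}(v)|$-type terms, which we expect to bound by $\sum_r(|V_r^{in}|+|V_r^{out}|)^2=n^{2+o(1)}$.

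The delicate points are two. First, making the charging argument rigorous when $u$ itself is a copy. Second, the fact that $d^0$ in $G_\tau$ can use copies; this is exactly where we would invoke the invariants to reduce back to base vertices. We expect that a strengthened, $h$-hop version of the betweenness bound may be needed for this step.

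\textbf{Step 4: hop reduction (Property 2).} Finally, we rerun the three-case argument of \Cref{lem:shortcut} on an $h$-hop shortest path in $G_\tau$. Cases (1) and (2) use the steps (\ref{item:shortcut-4}) and (\ref{item:shortcut-5}) edges. In case (3), the imaginary edges at $u''$ and $v''$ are replaced by either a copy path or a brute-forced edge, both non-negative and of no greater weight. This gives at most $h-\lfloor h/3\rfloor$ hops.
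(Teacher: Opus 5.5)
\textbf{Gap: existing copies in $G_\tau$.} Your proposal breaks down exactly where you flag the ``delicate points'': the copies already present in $G_\tau$. Your Step~3 fallback and charging are fine when $G_\tau$ has no copies. For $\tau\ge 1$, however, you face a dilemma.

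\emph{Option 1: bucket separately for each vertex of $V_\tau$.} Here each existing copy $v_i$ gets its own $v_i^{(j)}$. The fallback $d^0(u,r')\le\omega(u,v_i)+d^0(v_i,r')<\Delta_{r'}$ is then a genuine $0$-hop path in $G_\tau$, and weak betweenness suffices. But you add $\Theta(c\log n)$ new copies per base vertex, not $O(\log n)$. That violates Property~3, and over $\Theta(\log k)$ iterations it compounds to $n\cdot(\log n)^{\Theta(\log k)}$ vertices.

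\emph{Option 2: aggregate per base vertex, as Property~3 demands.} Now the shift of $v^{(j)}$ is attained by pairs $(v_{i'},r')$ where $v_{i'}$ is in general a different copy from the $v_i$ on the locally-negative path. The fallback only yields $\omega(u,v_i)+\delta(v_{i'})-\delta(v_i)+d^0(v_{i'},r')<\Delta_{r'}$. This is not a $u\leadsto r'$ path in $G_\tau$ at all, so $u\in V^{in}_{r'}$ does not follow.

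Invariant~\ref{item:graph-invariant-2} cannot rescue this by ``reducing to base vertices.'' It lower-bounds edge weights by \emph{unrestricted} distances $d_{G_\tau}(u_0,v_0)$ and says nothing about hop-bounded ones. Indeed, the whole point of copies is that $0$-hop paths through them realize multi-hop distances among base vertices. For the same reason, setting $\phi_\tau(v_i)=\phi_\tau(v_0)$ from a reduction run only on base vertices need not be valid. A copy edge $(u_i,v_j)$ can have weight far below $\phi_\tau(v_0)-\phi_\tau(u_0)$, so it would become negative.

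\textbf{Missing idea: the auxiliary graph $H_\tau$.} The paper adds an imaginary negative edge $(v_i,v_j)$ of weight $\delta(v_j)-\delta(v_i)$ between every pair of copies. The in-side fallback then becomes a genuine $1$-hop path $u\to v_i\to v_{i'}\leadsto r'$, and the out-side one a $2$-hop path $r'\leadsto v_{i'}\to v_i\to w$. The potential is computed on all of $H_\tau$, and the shifts are readjusted accordingly.

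You then need searches for these hop counts: the layered two-sided Dijkstra of \Cref{lem:forward-backward} with $h=1$ on $H$. These certify paths \emph{through} a vertex with free endpoints. Bounding their sizes therefore requires the $h$-hop \emph{strong} betweenness reduction of \Cref{lem:strong-bw-reduction} with $h=2$ on $H_\tau$, which via \Cref{lem:V-out-in-size} controls $\sum_r(|\tilde V^{in}_r|+|\tilde V^{out}_r|)^2$. The brute force then ranges over $u\in\tilde V^{in}_{r'}$ and $w\in\tilde V^{out}_{r'}$.

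What your Steps~1--2 invoke, \Cref{lem:betweenness-reduction} with \Cref{lem:dijkstra-both-ways,lem:total-size-bound}, controls only $0$-hop in-sets and $1$-hop out-sets in $G_\tau$. It cannot bound the sets into which $u$ and $w$ actually fall. You do anticipate that ``a strengthened, $h$-hop version'' may be needed, but the proof as written does not supply it.

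Once that machinery is in place, your per-vertex rank bucketing with charging to the $2^j$ witnesses does appear to go through, losing only polylogarithmic factors in $c$ for repeated $r'$ across copies. It would be a reasonable alternative to the paper's global size classes $N_{i+1}\setminus N_i$. Without $H_\tau$, though, your Steps~1--3 do not establish the theorem for $\tau\ge 1$.
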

	
	Before we prove \Cref{thm:one-iteration-shortcutting}, we first show how to apply it to prove \Cref{thm:main}.
	
	\begin{proof}[Proof of \Cref{thm:main}]
		If $k\le 2^{\sqrt{\log{n}}}$ at any point, then naively computing $k$-hop shortest paths suffices since the running time $O(kn^2)$ meets the desired bound. Otherwise, each time we apply \Cref{thm:one-iteration-shortcutting}, the number of negative hops along shortest paths drops by a constant factor. By applying the shortcutting procedure $O(\log{k})\le O(\log{n})$ times, we guarantee that all shortest paths have at most 2 negative edges. On the resulting graph, we can run $2$-hop shortest paths to compute the desired single-source distances in the final graph. Using the reweightings $\phi_{\tau}$ and property (1), we can then recover the shortest paths in the original graph.
		
		By property (3), each iteration increases the number of vertices by $O(n\log{n})$, so all graphs have at most $O(n\log^2{n})$ vertices. Thus, the runtime is dominated by $O(\log{k})$ recursive calls to single-source shortest paths, and $n^{2+o(1)}$ additional time, giving the following recursion:
		\begin{align*}
			T(n,k)\le O(\log{n})\cdot T\left(O(n\log^2{n}),k/2^{\sqrt{\log{n}}}\right)+n^{2+o(1)}.
		\end{align*}
		There are at most $\sqrt{\log{n}}$ levels of recursion since the number of negative vertices drops by $2^{\sqrt{\log{n}}}$ in each recursive call. On recursion level $i\ge 1$, there are $O(\log^{i}n)$ recursive instances, each with at most $O(n\log^{2i}{n})$ vertices. Since $i\le \sqrt{\log{n}}$, the total size of all instances is at most $n^{1+o(1)}$ so the total runtime is still $n^{2+o(1)}$.
	\end{proof}
	
	We now give our algorithm for \Cref{thm:one-iteration-shortcutting}. The high level approach is the same as in \cite{li2025shortcutting}. First, we apply a reweighting to the graph to reduce the betweenness for every pair of vertices. In the reweighted graph, we compute forward and backward searches from each negative vertex such that every vertex in the backward search can negatively reach every vertex in the forward search using few hops. Finally, we add shortcut edges between the backward and forward searches for each negative vertex to decrease the number of hops needed in a shortest path by a constant factor. To implement this strategy more efficiently than \cite{li2025shortcutting}, we need to strengthen each of the steps, as we see next. 
	
	\subsection{Stronger Betweenness Reduction}
	
	\label{sec:betweenness}
	
	In this section, we define a stronger notion of betweenness and show how to compute a reweighting which reduces strong betweenness. Informally, the weak betweenness of $s,t\in V$ is the number of vertices $v$ such that there is an $2h$-hop negative path from $s$ to $t$ through $v$ (actually, it requires $d^h(s,v)+d^h(v,t)<0$). Strong betweenness relaxes the condition, allowing the negative path to start and end at arbitrary vertices $s',t'$, as long as the path goes through $s$ and $t$. 
	
	\begin{definition}
		For a pair $(s,t)\in V\times V$ and a subset $N\subseteq V$, its $h$-hop strong betweenness, denoted $\mathrm{SBW}_N(s,t)$, is the number of vertices $v\in N$ for which there exist $s',t'\in V$ satisfying
		\begin{align*}
			d^h(s',s)+d^h(s,v)+d^h(v,t)+d^h(t,t')<0.
		\end{align*}
	\end{definition}
	We remark that strong betweenness is defined with respect to a set of vertices $N\subseteq V$. In our paper, this will always be the set of negative vertices $N$ in the original graph, hence the notation, so the dependence $N$ is often left implicit. We emphasize that $N$ need not contain the endpoint of every negative edge in the graph. Indeed, we will be applying betweenness reduction on a graph with additional negative edges, but with the original set of negative vertices $N$. 
	
	We adapt the algorithm for \Cref{lem:betweenness-reduction} given in \cite{li2025shortcutting} for strong betweenness reduction.

	\newcommand{\tforw}{\textrm{forward}}
	\newcommand{\tback}{\textrm{backward}}

	\begin{lemma}
		Consider a graph $G=(V,E)$ with a set of distinguished vertices $N\subseteq V$, called negative vertices. For any parameters $h\ge 1$ and $b\ge 1$, there is a randomized algorithm which returns either a set of valid potentials $\phi$ or a negative cycle, such that with high probability, all pairs $(s,t)\in V\times V$ have $\text{SBW}(s,t)\le k/b$ under the new weights $\omega_{\phi}$. The algorithm makes one call to negative-weight single-source shortest path on a graph with $O(mh)$ edges, $O(nh)$ vertices, and $O(b\log{n})$ negative edges, and takes $O(mh\log{n})$ additional time.\label[lemma]{lem:strong-bw-reduction}
	\end{lemma}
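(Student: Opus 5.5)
We adapt the standard betweenness-reduction argument (Fineman; Lemma 6 of \cite{li2025shortcutting}) in three steps: sample a few negative vertices, compute a potential from their hop-bounded distances, and charge every high-betweenness pair to a sampled vertex.

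\textbf{Step 1: sampling.} Sample a random set $R\subseteq N$ of $\Theta(b\log n)$ negative vertices.

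\textbf{Step 2: the layered graph.} Build a graph $H$ from $h+1$ layered copies $V^{(0)},\dots,V^{(h)}$ of $V$.
\begin{itemize}
\item Inside each layer, put a copy of every non-negative edge.
\item Each negative edge of layer $i$ goes from layer $i$ to layer $i+1$.
\item Identify the last layer with the first, or add zero-weight edges $v^{(i)}\to v^{(i+1)}$, so that $d_H$ between top and bottom copies encodes $d^h$.
\end{itemize}
$H$ has $O(nh)$ vertices and $O(mh)$ edges.

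To obtain $O(b\log n)$ negative edges, we reduce to a Fineman-style construction instead of keeping every original negative edge. For each $x\in R$, precompute $d^h(\cdot,x)$ and $d^h(x,\cdot)$ by the hop-bounded Dijkstra/Bellman-Ford hybrid, which costs $O(|R|\cdot h(m+n\log n))$. Then add a super source $z$ and edges that encode the quantities
\[
\phi_1(v)=\min_{x\in R}\ \min\bigl(0,d^h(v,x)\bigr),\qquad \phi_2(v)=\min_{x\in R}\ \min\bigl(0,d^h(x,v)\bigr).
\]
This is realized in $H$ by giving each sampled $x$ a single negative edge out of a layered gadget. The one shortest-path call on $H$ then computes a combined potential $\phi$ that is valid, i.e.\ it makes every non-negative edge stay non-negative. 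The exact form of $\phi$ matches the one in \cite{li2025shortcutting}: $\phi(v)=\min_{x\in R}\bigl(d^h(v,x)\bigr)$ combined with the symmetric forward term. Validity follows from the triangle inequality for hop-bounded distances along a non-negative edge, since such an edge does not increase the hop count. If the call reports a negative cycle, we return it.

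\textbf{Step 3: the betweenness bound.}
\begin{itemize}
\item Fix a pair $(s,t)$ and suppose it has more than $k/b$ vertices $v\in N$ with witnesses $s',t'$ satisfying
\[
d^h(s',s)+d^h(s,v)+d^h(v,t)+d^h(t,t')<0 .
\]
\item With high probability $R$ contains such a $v=x$, by a union bound over the $n^2$ pairs.
\item After reweighting, we must show every such witness chain becomes non-negative. The reweighted chain equals the original chain plus $\phi(s')-\phi(t')$, because the potential telescopes.
\item The potential is chosen so that $\phi(s')\ge -d^h(s',x)$-type terms and $\phi(t')$ relates to $d^h(x,t')$. The contributions through $x$ then telescope and nonnegativity follows.
\item The delicate part is that the witnesses $s',t'$ are arbitrary, so the potential must dominate all prefixes into $s$ and all suffixes out of $t$. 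For this reason we take $\phi(u)=\min_{x\in R}\min_{w} d^h(u,\cdot)$ in the ``strong'' form, namely $\phi(u)=\min\bigl(0,\min_{x\in R}d^{2h}(u,x)\bigr)$ together with the symmetric term.
\item We then prove validity via subpath optimality of the concatenation $s'\to s\to x\to t\to t'$.
\end{itemize}

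\textbf{Main obstacle.} I expect the hardest step to be showing that, after reweighting, every chain through a sampled vertex is non-negative even though the intermediate hop-bounded distances are not recomputed. In particular, the reweighted hop distances satisfy $d^h_\phi(a,b)=d^h(a,b)+\phi(a)-\phi(b)$, and the witness inequality must fail for all sampled $x$. I would first try to define $\phi$ as the shortest-path distance in $H$ from a super source, where $H$ contains zero edges into every vertex and edges out of each sampled $x$. In that case the reweighted edges are non-negative on the entire layered gadget, so every $h$-hop chain through $x$ has non-negative reweighted length. This gives the contradiction with the witness inequality for $v=x$.
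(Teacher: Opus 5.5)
Your proposal has a genuine gap, and it sits exactly where this lemma does its work: producing a single recursive call with only $O(b\log n)$ negative edges while spending only $O(mh\log n)$ extra time.

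Precomputing $d^h(\cdot,x)$ and $d^h(x,\cdot)$ for every $x\in R$ costs $\Theta(|R|\cdot h(m+n\log n))=\Theta(b\,h\,(m+n\log n)\log n)$, which violates the stated bound. In the application, $b=k/2^{\sqrt{\log n}}$, and with $k=\Theta(n)$ and $m=\Theta(n^2)$ this step alone is $n^{3-o(1)}$. Your other option, keeping the original negative edges in the layers, gives the recursive call about $h$ times as many negative edges as $G$ has, not $O(b\log n)$. Your closing instinct, to take $\phi$ as a distance function in a layered gadget, is the right one. The missing idea is the $+M$ shift:
\begin{itemize}
\item The paper uses $4h+1$ copies of $G^+$: a base copy, $2h$ forward layers and $2h$ backward layers.
\item It adds $M$ to every inter-layer edge (the self-edges and the copies of negative edges), so all of them are non-negative.
\item For each sampled $w$ it adds one edge of weight $-4hM$ from the last forward copy of $w$ to the last backward copy of $w$.
\end{itemize}
Any route that leaves and re-enters the base copy crosses exactly $4h$ inter-layer edges and one $-4hM$ edge, so the $M$'s cancel. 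Hence $H$ has only $|S|$ negative edges, one call computes $\phi'=d_H(V_H,\cdot)$, and a negative cycle in $H$ projects to a negative closed walk in $G$.

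Second, your potential is never pinned down. Step 2 uses $h$-hop terms, while Step 3 switches to $\min(0,\min_{x\in R}d^{2h}(u,x))$ plus an unspecified symmetric term. The property that drives the argument is asserted rather than proved: you need one valid $\phi$ with $d^{2h}_\phi(s',x)\ge 0$ and $d^{2h}_\phi(x,t')\ge 0$ for all $s',t'\in V$ and every sampled $x$. Your candidates do not give this.
\begin{itemize}
\item Take $D(u)=\min_{x\in R}d^{2h}(u,x)$ with the sign that makes it valid. You only get $d^{2h}_\phi(s',x)\ge\min(0,D(x))$, and this can be negative because hop-bounded distances do not compose. For example, if $D(s')=d^{2h}(s',x)<0$ and some negative $2h$-hop walk leads from $x$ to another sampled vertex, the reweighted value is exactly $D(x)<0$.
\item The sum of two separately valid potentials need not be valid.
\end{itemize}
What works is a genuine distance function with $\phi(v)-\phi(u)\le d^{2h}(u,x)+d^{2h}(x,v)$ for all $u,v\in V$ and $x\in S$. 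Taking $v=x$ or $u=x$ and using $d^{2h}(x,x)\le 0$ gives both bounds at once. The paper's $\phi(v)=\phi'(v_0)$ satisfies this inequality because the two walks route from $u_0$ through the forward layers, the $-4hM$ edge at $x$, and the backward layers to $v_0$ at the same total weight.

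Finally, your union bound in Step 3 ranges over the set of witnessed $v$, but after reweighting that set depends on $R$. The paper instead orders $N$ by $d^h(s,v)+d^h(v,t)$. This order is independent of $R$, and reweighting shifts every value by the same $\phi(s)-\phi(t)$. With high probability the minimizing sampled $x$ is among the first $k/b$. The chain $0\le d^{2h}_\phi(s',x)+d^{2h}_\phi(x,t')\le d^h_\phi(s',s)+d^h_\phi(s,x)+d^h_\phi(x,t)+d^h_\phi(t,t')$ then excludes every $v$ ranked at or after $x$, not only $x$ itself.
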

	
	\begin{proof}
		We construct a directed graph $H$ containing $4h+1$ copies of $G^+$, where $G^+$ denotes the graph $G$ with all negative edges removed.  These copies are denoted by $G_{0}$, $G^{\tforw}_{{i}}$ for $i\in[2h]$, and $G^{\tback}_{{i}}$ for $i\in[2h]$ and the copies of vertex $v$ are denoted $v_0$, $v_i^{\tforw}$ for $i\in[2h]$, and $v_i^{\tback}$ for $i\in[2h]$, respectively, for each $v\in V$. 
        
		We now define the edges between copies of $G^+$. Let $M\in\mathbb{R}_{\ge0}$ be larger than the absolute value of all edge weights. For each $u\in V$, add the following ``self-edges'', each with weight $M$:
		\begin{itemize}
			\item An edge $(u_{0},u^{\tforw}_{{1}})$ and edges $(u^{\tforw}_{{i}},u^{\tforw}_{{i+1}})$ for each $i\in[2h-1]$.
			\item An edge $(u^{\tback}_{{1}},u_{0})$ and edges $(u^{\tback}_{i+1},u^{\tback}_{i})$ for each $i\in[2h-1]$
		\end{itemize}
		Next, for each negative edge $(u,v)$, add the following edges, each with weight $\omega(u,v)+M\ge0$:
		\begin{itemize}
			\item An edge $(u_{0},v^{\tforw}_{{1}})$ and edges $(u^{\tforw}_{{i}},v^{\tforw}_{{i+1}})$ for each $i\in[2h-1]$.
			\item An edge $(u^{\tback}_{{1}},v_{0})$ and edges $(u^{\tback}_{i+1},v^{\tback}_{i})$ for each $i\in[2h-1]$
		\end{itemize}
		Finally, sample a random set $S\subseteq N$ with $O(b\log n)$ vertices; for each vertex $w\in S$, add an edge $(w^{\tforw}_{{2h}},w^{\tback}_{{2h}})$ with weight $-4hM$. We compute $\phi'(v)=d(V_H,v)$ for each $v\in V_H$ via a call to negative-weight single-source shortest path. It is easy to see that $|E_H|=O(mh)$, $|V_H|=O(nh)$, and the number of negative edges is $|S|=O(b\log{n})$.

	\begin{figure}[h]\centering
		\includegraphics[scale=.8]{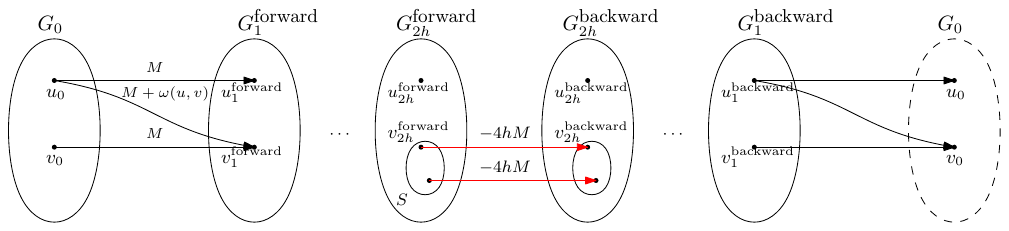}
        \label{fig:betweenness-1}
	\end{figure}

		\begin{subclaim}
			If we find a negative cycle in $H$, there is a negative cycle in $G$.
		\end{subclaim}
		\begin{subproof}
			Map each edge $(u_i,v_j)\in H$ in the negative cycle to edge $(u,v)\in G$ to obtain a closed walk in $G$. We will show that this closed walk has negative weight, so it is a negative cycle in $G$.
			First, observe that the negative cycle must traverse the copies of $G^+$ in the following order:
			\[G_{0}\to G^{\tforw}_{{1}}\to\ldots\to G^{\tforw}_{{2h}}\to G^{\tback}_{{2h}}\to\ldots\to G^{\tback}_{{1}}\to G_{0}.\] 
			For each such loop traversed, there are $4h$ edges with additional $+M$ weight and one edge, from $G^{\tforw}_{{2h}}$ to $G^{\tback}_{{2h}}$, with weight $-4hM$. Thus, the additional weights in $H$ cancel out, so the closed walk in $G$ has the same (negative) weight as the negative cycle in $H$, as desired.
		\end{subproof}
		
		If we don't find a negative cycle in $H$, then we obtain the reweighting $\phi'(v)=d_H(V,v)$ for each $v\in V_H$. Define $\phi(v)=\phi'(v_0)$ for each $v\in V$, which is a valid potential function because any non-negative edge $(u,v)$ has a copy $(u_0,v_0)$ in $G_0$, so $\phi(v)-\phi(u)=\phi'(v_0)-\phi'(u_0)\le\omega(u,v)$ by the triangle inequality on distance function $\phi'$. We now show that $\phi$ achieves the strong betweenness reduction guarantee with high probability. We start with the following subclaim.
			
		\begin{subclaim}\label[subclaim]{cl:2h-hop-is-positive}
			We have $d^{2h}_{\phi}(s',x)\ge 0$ and $d^{2h}_{\phi}(x,t')\ge 0$ for all $s',t'\in V$ and $x\in S$.
		\end{subclaim}
		\begin{subproof}
			First, we observe that for all $u,v\in V$, we have 
			\begin{align}
				\phi(v)-\phi(u)\le \min_{w\in S}d^{2h}(u,w)+d^{2h}(w,v).\label{eq:betweenness-phi}
			\end{align}
			Indeed, for any vertex $w\in S$, the path in $G$ achieving $d^{2h}(u,w)+d^{2h}(w,v)$ can be routed in $H$ by routing the $2h$-hop path $d^{2h}(u,w)$ from $u_0$ to $w^{\tforw}_{2h}$, traversing the edge $(w^{\tforw}_{{2h}},w^{\tback}_{{2h}})$, and then routing the $2h$-hop path $d^{2h}(w,v)$ from $w^{\tback}_{{2h}}$ to $v_{0}$. Thus, there is a path from $u_0$ to $v_0$ of weight $d^{2h}(u,w)+d^{2h}(w,v)$. Since $\phi$ is a distance function, any path from $u_{0}$ to $v_{0}$ has weight at least $\phi(v_{0})-\phi(u_{0})$, establishing \Cref{eq:betweenness-phi}.
			
			For any $s'\in V$ and $x\in S$, taking $u=s'$ and $v=x$ in \Cref{eq:betweenness-phi} implies that
			\begin{align*}
				\phi(x)-\phi(s')\le\min_{w\in S}d^{2h}(s',w)+d^{2h}(w,x)\le d^{2h}(s',x)+d^{2h}(x,x)\le d^{2h}(s',x).
			\end{align*}
			Similarly, for any $x\in S$ and $t'\in S$, taking $u=x$ and $v=t'$ in \Cref{eq:betweenness-phi} implies that
			\begin{align*}
				\phi(t')-\phi(x)\le\min_{w\in S}d^{2h}(x,w)+d^{2h}(w,t')\le d^{2h}(x,x)+d^{2h}(x,t')\le d^{2h}(x,t').
			\end{align*}
			Rearranging implies the desired results.  
		\end{subproof}	
		
		Using the subclaim, we establish the strong betweenness reduction guarantee. Order the vertices by value $d^h(s,v)+d^h(v,t)$ and consider the sampled (negative) vertex $x\in S$ with the minimum such value. Since $d^h_{\phi}(s,v)+d^h_{\phi}(v,t)$ has the same relative ordering, there are at most $k/b$ negative vertices $v$ with $d^h_{\phi}(s,v)+d^h_{\phi}(v,t)<d^h_{\phi}(s,x)+d^h_{\phi}(x,t)$, with high probability. 
		For some negative vertex $v\in N$, suppose $d^h_{\phi}(s,v)+d^h_{\phi}(v,t)\ge d^h_{\phi}(s,x)+d^h_{\phi}(x,t)$. Then for any $s',t'\in V$, we have $d^{2h}_{\phi}(s',x)\ge 0$ and $d^{2h}_{\phi}(x,t')\ge 0$ by \Cref{cl:2h-hop-is-positive}, so we conclude that
		\begin{align*}
			0&\le d_{\phi}^{2h}(s',x)+d_{\phi}^{2h}(x,t')\\
			&\le d^h_{\phi}(s',s)+d_{\phi}^h(s,x)+d_{\phi}^h(x,t)+d^h_{\phi}(t,t')\\
			&\le  d^h_{\phi}(s',s)+d_{\phi}^h(s,v)+d_{\phi}^h(v,t)+d^h_{\phi}(t,t').
		\end{align*}
		Thus, negative vertices $v$ counted in $\text{SBW}(s,t)$ must have $d^h_{\phi}(s,v)+d^h_{\phi}(v,t)< d^h_{\phi}(s,x)+d^h_{\phi}(x,t)$. By the argument above, there can only be at most $k/b$ such vertices, as desired.
	\end{proof}

	\subsection{Forward and Backward Searches}
	
	This section presents a multi-hop generalization of \Cref{lem:dijkstra-both-ways}.
	
	\begin{lemma}\label[lemma]{lem:forward-backward}
		Given a hop bound $h$, there is an algorithm that, for any negative vertex $r\in N$, computes a number $\Delta_r$ and two sets $\tilde{V}_r^{out}$ and $\tilde{V}_r^{in}$ such that 
		\begin{enumerate}
			\item for all $v\in \tilde{V}_r^{out}$, there is an $(h+1)$-hop path starting at $r$ and using $v$ of weight $\le -\Delta_r$,\label{item:dijsktra-both-ways-1}
			\item for all $v\not\in\tilde{V}_r^{out}$, all $(h+1)$-hop paths starting at $r$ and using $v$ have weight $\ge -\Delta_r$,\label{item:dijsktra-both-ways-2}
			\item for all $v\in\tilde{V}_r^{in}$, there is an $h$-hop path ending at $r$ and using $v$ of weight $\le \Delta_r$,\label{item:dijsktra-both-ways-3}
			\item for all $v\not\in\tilde{V}_r^{in}$, all $h$-hop paths ending at $r$ and using $v$ have weight $\ge \Delta_r$,\label{item:dijsktra-both-ways-4}
			\item either (\ref{item:dijsktra-both-ways-1}) or (\ref{item:dijsktra-both-ways-3}) is satisfied with strict inequality, and \label{item:dijsktra-both-ways-5}
			\item $\big||\tilde{V}_r^{out}-\tilde{V}_r^{in}|\big|\le 1$.\label{item:dijsktra-both-ways-6}
		\end{enumerate}
		The algorithm uses $\tilde{O}(hn^2)$ preprocessing time for fixed $h$, and on query $r\in N$, runs in time $O\big((|\tilde{V}_r^{out}|+|\tilde{V}_r^{in}|)^2\cdot h^2+(\tilde{V}_r^{out}+\tilde{V}_r^{in})\cdot h\log{n}\big)$. Moreover, the algorithm can output the values of $d^{h+1}(r,v)$ for all $v\in\tilde{V}_r^{out}$ and $d^h(v,r)$ for all $v\in\tilde{V}_r^{in}$.
	\end{lemma}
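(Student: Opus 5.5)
We run a hop-layered, bidirectional Dijkstra from $r$, mirroring the proof of \Cref{lem:dijkstra-both-ways} (Lemma 8 of \cite{li2025shortcutting}) but over $h$-hop paths. For a vertex $v$ define the forward key $f_r(v)=\min\{\text{weight of an }(h+1)\text{-hop path starting at }r\text{ and using }v\}$ and the backward key $g_r(v)=\min\{\text{weight of an }h\text{-hop path ending at }r\text{ and using }v\}$. Conditions (1)--(4) are exactly threshold conditions: $\tilde V^{out}_r$ must be a set of vertices of smallest $f_r$-value and $\tilde V^{in}_r$ a set of smallest $g_r$-value, split at the common threshold $-\Delta_r$ and $\Delta_r$. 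So the plan is to enumerate vertices in increasing order of $f_r$ (the out-search) and in increasing order of $g_r$ (the in-search), alternating one vertex from each side. We stop at the first moment the next candidate values $f$ and $g$ satisfy $f_{\text{next}}+g_{\text{next}}\ge 0$, and set $\Delta_r$ between them, breaking ties as in Lemma 8. This alternation gives (6), and the tie-breaking at the stopping point gives (5) exactly as in the one-hop case.

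The preprocessing step ($\tilde O(hn^2)$ for fixed $h$) is meant to let us evaluate $f_r,g_r$ locally. The key difficulty is that a path ``using $v$'' may continue beyond $v$ into vertices not yet explored. I would precompute, for every vertex $v$, the best $j$-hop suffix value $\min_{t'} d^{j}(v,t')$ for each $j\le h+1$. Symmetrically, I would precompute the best $j$-hop prefix value $\min_{s'}d^j(s',v)$ for each $j\le h$. Each of these is a single multi-source $j$-hop computation (add a super source or sink), costing $O(h(m+n\log n))$ per $j$, hence $\tilde O(h^2 n^2)$ overall. With care, using one layered Dijkstra-Bellman-Ford pass that records all hop levels, this becomes $\tilde O(hn^2)$. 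Then $f_r(v)=\min_{i+j\le h+1} d^i(r,v)+\mathrm{suf}_j(v)$, and similarly for $g_r$.

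Next I would argue that the search can be restricted to the explored sets. Any prefix $r\leadsto v$ of a path witnessing $f_r(v)$ passes only through vertices $u$ with $f_r(u)\le f_r(v)$, because that same path uses $u$ as well. Hence when $v$ is popped, all intermediate vertices on its optimal prefix are already in $\tilde V^{out}_r$. This means the hop-layered distances $d^i(r,\cdot)$, $i\le h+1$, restricted to explored vertices can be computed by relaxing edges only among explored vertices, plus edges out of them to frontier candidates. I would implement it as a Dijkstra over states $(v,i)$ with keys $d^i(r,v)+\mathrm{suf}_{h+1-i}(v)$, which are monotone along paths since suffix values obey the triangle inequality. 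Following Lemma 8, we relax only into vertices among the candidate set, scanning adjacency between explored vertices. With $h$ hop layers on each side this costs $O(|\tilde V|^2h^2)$ for the pairwise relaxations, plus $O(|\tilde V|h\log n)$ for heap operations, matching the claimed bound. The outputs $d^{h+1}(r,v)$ and $d^h(v,r)$ are read off the state labels.

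The main obstacle is this locality and monotonicity argument. The key must be the ``path using $v$'' value rather than plain distance, and I must check two things: that the next vertex to add is always discoverable from an already-explored vertex, and that the heap keys are consistent with hop counts when negative edges are counted as hops. I expect the key fact to be that a suffix from $v$ through its out-neighbour $w$ bounds $\mathrm{suf}_j(v)\le \omega(v,w)+\mathrm{suf}_{j'}(w)$, with $j'=j$ or $j-1$. This is what makes key values nondecreasing along an optimal path, which justifies Dijkstra-like processing in the presence of negative edges.
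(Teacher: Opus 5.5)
Your proposal is correct and is essentially the paper's proof. The paper builds layered copies of $G^+$ reweighted by the potentials $\phi_{out}(v_i)=-d^{h-i}(v,V)$ and $\phi_{in}(v_i)=-d^{h-i}(V,v)$, which are exactly your suffix/prefix keys. It then alternates Dijkstra in $H_{out}$ and $H_{in}$, stops once $d_{in}+d_{out}\ge\phi_{out}(r_0)+\phi_{in}(r_0)$, and uses the lazy-insertion trick for the query time. The only cosmetic difference is that you count the edge $(r,r')$ as a layer transition, so all reduced costs are nonnegative, whereas the paper inserts $(r_0,r'_0)$ inside layer $0$ and relies on Dijkstra tolerating a negative out-edge of the source.
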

	\begin{proof}
		We will run two Dijkstra's algorithms in parallel in two separate graphs, both with source $r$. To define these graphs, we first compute $d^{i}(V,v)$ and $d^{i}(v,V)$ for each $v\in V$, $i\in\{0,\ldots,h\}$, which can be done in $\tilde{O}(hn^2)$ time. Informally, $H_{out}$ and $H_{in}$ are layered graphs with non-negative edges defined so we can use Dijkstra to simulate $h$-hop shortest paths. Below, recall that $G^+$ denotes the graph $G$ with all negative edges removed.
		\begin{enumerate}
			\item The first graph $H_{out}$ starts with $h+1$ copies of $G^+$, which we denote $G_i^{out}$ for $i\in\{0,\ldots,h\}$. Between copies $G_{i-1}^{out}$ and $G_{i}^{out}$ for each $i\in[h]$, we add edges $(u_{i-1},v_{i})$ with weight $\omega(u,v)$ for each negative edge $(u,v)$ and also ``self-edges'' $(u_{i-1},u_i)$ with weight $0$ for each vertex $u\in V$. Finally, we apply a potential function $\phi_{out}$ on $H_{out}$ to guarantee that all edge weights are non-negative. Define $\phi_{out}(v_i)=-d_G^{h-i}(v,V)$ for each $i\in\{0,\ldots,h\}$ and $v_i\in V$. We use $\phi_{out}$ to reweight the graph and denote the resulting graph as $H_{out}$.

			We now show that all edges are non-negative after the reweighting. For any non-negative edge $(u,v)\in G^+$, there are $h+1$ copies of the edge in $H_{out}$ of the form $(u_i,v_i)$ for some $i\in\{0,\ldots,h\}$. After the reweighting, this edge will have weight
			$$\omega_{H_{out}}(u_i,v_i)=\omega(u,v)-d_G^{h-i}(u,V)+d_G^{h-i}(v,V)\ge 0.$$
			For any negative edge $(u,v)$, there are $h$ copies of the edge in $H_{out}$ of the form $(u_{i-1},v_i)$ for some $i\in[h]$. After the reweighting, this edge will have weight
			\begin{align*}
				\omega_{H_{out}}(u_{i-1},v_i)=\omega(u,v)-d_G^{h+1-i}(u,V)+d_G^{h-i}(v,V)\ge 0.
			\end{align*}
			Finally, for every vertex $u$, there is a self-edge $(u_{i-1},u_i)$ for each $i\in[h]$. After the reweighting, this edge will have weight
			$$\omega_{H_{out}}(u_{i-1},u_{i})=d_G^{h-i}(u,V)-d_G^{h-i+1}(u,V)\ge 0.$$
			The non-negativity of each of these edges follows by the triangle inequality.
			\item The second graph $H_{in}$ starts with $h+1$ copies of $G^+$ with edges reversed, which we denote $G^i_{in}$ for $i\in\{0,\ldots,h\}$. Between copies $G^{i-1}_{in}$ and $G^{i}_{in}$ for each $i\in[h]$, we add edges $(v_{i-1},u_{i})$ with weight $\omega(u,v)$ for each negative edge $(u,v)$ and also self-edges $(u_{i-1},u_i)$ with weight $0$ for each vertex $u\in V$. We also define a potential function $\phi_{in}$ on $H_{in}$ to guarantee that all edge weights are non-negative, which is defined as follows. For $i\in\{0,\ldots,h\}$ and each copy $v_i\in V$, we set $\phi_{in}(v_i)=-d_G^{h-i}(V,v)$. We use $\phi_{in}$ to reweight the graph and denote the resulting graph as $H_{in}$.
			
			We now show that all edges are non-negative after the reweighting.
			For any non-negative edge $(u,v)\in G^+$, there are $h+1$ copies of the edge in $H_{in}$ of the form $(v_i,u_i)$ for some $i\in\{0,\ldots,h\}$. After the reweighting, this edge will have weight $$\omega_{H_{in}}(v_{i},u_{i})=\omega(u,v)-d_G^{h-i}(V,v)+d_G^{h-i}(V,u)\ge 0.$$
			For any negative edge $(u,v)$, there are $h$ copies of the edge in $H_{in}$ of the form $(v_{i-1},u_i)$ for some $i\in[h]$. After the reweighting, this edge will have weight
			$$\omega_{H_{in}}(v_{i-1},u_{i})=\omega(u,v)-d_G^{h-i+1}(V,v)+d_G^{h-i}(V,u)\ge 0.$$
			Finally, for every vertex $u$, there is a self loop $(u_{i-1},u_i)$ for each $i\in[h]$. After reweighting, this edge will have weight 
			$$\omega_{H_{in}}(u_{i-1},u_i)=d_G^{h-i}(V,u)-d_G^{h-i+1}(V,u)\ge 0.$$
			The non-negativity of each of these edges follows by the triangle inequality.
		\end{enumerate}
		In the pre-processing stage, we construct graphs $H_{in},H_{out}$ and compute potentials $\phi_{in},\phi_{out}$. This takes a total of $\tilde{O}(hn^2)$ time.
		
		Next, consider a negative vertex $r\in N$. By the properties of negative vertices, we know there is a single out-edge $(r,r')$ of $r$, and this edge may be negative. We add an additional edge $(r_0,r'_0)$ to $H_{out}$ with weight $\omega(r,r')+\phi_{out}(r_0)-\phi_{out}(r'_0)$. This completes the definition of the graphs $H_{in}$ and $H_{out}$ which we perform searches on. Before stating our algorithm, we characterize the shortest path distances from $r_0$ in $H_{in}$ and $H_{out}$:
		\begin{align}
			d_{H_{out}}(r_0,v_i)&=d^{i+1}_G(r,v)+\phi_{out}(r_0)-\phi_{out}(v_i)=d^{i+1}_G(r,v)-d^h_G(r,V)+d^{h-i}_G(v,V)\label{eq:dist-h-out}\\
			d_{H_{in}}(r_0,v_i)&=d^i_G(v,r)+\phi_{in}(r_0)-\phi_{in}(v_i)=d^i_G(v,r)-d^h_G(V,r)+d^{h-i}_G(V,v)\label{eq:dist-h-in}
		\end{align}
		The expressions follow from the layered graph structure of $H_{in}$ and $H_{out}$. We note that the reason shortest paths to $v_i$ in $H_{out}$ correspond to $(i+1)$-hop shortest paths in $G$ is due to the additional negative edge $(r_0,r'_0)$ we add. All paths starting from $r_0$ in $H_{out}$ use this edge, increasing the hop count by 1. 
		
		Now, we perform the Dijkstra searches on $H_{in}$ and $H_{out}$, which is valid since $H_{in}$ has non-negative edges and the only (possibly) negative edge in $H_{out}$ is an out-edge from the source. Recall that Dijkstra's algorithm maintains a subset of processed vertices, a minimum distance $d$ to a processed vertex, and has the guarantee that $d(r,v)\le d$ for all processed vertices $v$. We alternate processing vertices in the two instances of Dijkstra's algorithm. Let $\tilde{V}_r^{in}$ and $\tilde{V}_r^{out}$ to be the set of vertices that have at least one copy processed in $H_{in}$ and $H_{out}$, respectively. When we process a vertex $v_i$ in $H_{in}$ (and similarly for $H_{out}$), we add $v$ into $\tilde{V}_r^{in}$ if it is not already in this set. Each step, we process on $H_{in}$ if $\tilde{V}_r^{in}$ currently contains fewer vertices compared to $\tilde{V}_r^{out}$, and $H_{out}$ otherwise. Let $d_{in}$ and $d_{out}$ denote the minimum distance to an unprocessed vertex in $H_{in}$ and $H_{out}$, respectively. We terminate the algorithm when $d_{in}+d_{out}\ge \phi_{out}(r_0)+\phi_{in}(r_0)$, and output $\tilde{V}_r^{in}$ and $\tilde{V}_r^{out}$ as the processed vertices in $H_{in}$ and $H_{out}$, respectively. Since we are alternating processing vertices in the two instances of Dijkstra's algorithm, we immediately have property (6): $\big||\tilde{V}_r^{out}|-|\tilde{V}_r^{in}|\big|\le 1$. If the last graph processed was $H_{out}$, the algorithm sets $\Delta_r=d_{in}-\phi_{in}(r_0)$; otherwise, the last graph processed was $H_{in}$ and the algorithm sets $\Delta_r=-d_{out}+\phi_{out}(r_0)$.
		
		Suppose the last graph processed was $H_{out}$, so $\Delta_r=d_{in}-\phi_{in}(r_0)$. Let $d_{out}'<d_{out}$ be the maximum distance to a processed vertex in $H_{out}$. Since the algorithm didn't terminate before processing that vertex, we have $d_{in}+d'_{out}<\phi_{in}(r_0)+\phi_{out}(r_0)$. We now verify the properties:
	
		\begin{enumerate}
			\item For all vertices $v\in\tilde{V}_r^{out}$, some copy $v_i$ of $v$ satisfies 
			$$d_{H_{out}}(r_0,v_i)\le d'_{out}<-d_{in}+\phi_{out}(r_0)+\phi_{in}(r_0)=-\Delta_r-d^h_G(r,V),$$
			where the first inequality is guaranteed by Dijkstra and the rest are by definition. Plugging in (\ref{eq:dist-h-out}) for $d_{H_{out}}(r_0,v_i)$ and rearranging gives $d^{i+1}_G(r,v)+d^{h-i}_G(v,V)<-\Delta_r$, implying that there exists some $v'\in V$ such that an $(h+1)$-hop path from $r$ to $v'$ goes through $v$ and has total weight less than $-\Delta_r$. Note that this also satisfies property (5).
			\item For all vertices $v\not\in\tilde{V}_r^{out}$, all copies $v_i$ of $v$ satisfy
			\begin{align*}
				d_{H_{out}}(r_0,v_i)\ge d_{out}\ge -d_{in}+\phi_{out}(r_0)+\phi_{in}(r_0)=-\Delta_r-d^h_G(r,V),
			\end{align*}
			where the first inequality is guaranteed by Dijkstra and the rest are by definition. Plugging in (\ref{eq:dist-h-out}) for $d_{H_{out}}(r_0,v_i)$ and rearranging gives $d^{i+1}_G(r,v)+d_G^{h-i}(v,V)\ge-\Delta_r$ for $i\in\{0,\ldots,h\}$, implying that all $(h+1)$-hop paths starting at $r$ and using $v$ have weight at least $-\Delta_r$.
			\item For all vertices $v\in \tilde{V}_r^{in}$, some copy $v_i$ of $v$ satisfies 
			$$d_{H_{in}}(r_0,v_i)\le d_{in}=\Delta_r+\phi_{in}(r_0)=\Delta_r-d^h_G(V,r),$$
			where the first inequality is guaranteed by Dijkstra and the rest are by definition. Plugging in (\ref{eq:dist-h-in}) for $d_{H_{in}}(r_0,v_i)$ and rearranging gives $d^{i}_G(v,r)+d^{h-i}_G(V,v)\le \Delta_r$, implying that there exists some $v'\in V$ such that an $h$-hop path from $v'$ to $r$ goes through $v$ and has total weight at most $\Delta_r$. 
			\item For all vertices $v\not\in \tilde{V}_r^{in}$, all copies $v_i$ of $v$ satisfy 
			$$d_{H_{in}}(r_0,v_i)\ge d_{in}=\Delta_r+\phi_{in}(r_0)=\Delta_r-d^h_G(V,r),$$ 
			where the first inequality is guaranteed by Dijkstra and the rest are by definition. Plugging in (\ref{eq:dist-h-in}) for $d_{H_{in}}(r_0,v_i)$ and rearranging gives $d^{i}_G(v,r)+d^{h-i}_G(V,v)\ge \Delta_r$ for $i\in\{0,\ldots,h\}$, implying that all $h$-hop paths ending at $r$ and using $v$ have weight at least $\Delta_r$. 
		\end{enumerate}
		
		The case where the last graph processed was $H_{in}$ and $\Delta_r=-d_{out}+\phi_{out}(r_0)$ is similar. Let $d'_{in}<d_{in}$ be the maximum distance to a processed vertex in $H_{in}$. Since the algorithm didn't terminate before processing the vertex, we have $d'_{in}+d_{out}<\phi_{in}(r_0)+\phi_{out}(r_0)$. We now verify the properties again, with similar proofs:
		\begin{enumerate}
			\item For all vertices $v\in\tilde{V}_r^{in}$, some copy $v_i$ of $v$ satisfies
			$$d_{H_{in}}(r_0,v_i)\le d'_{in}<-d_{out}+\phi_{in}(r_0)+\phi_{out}(r_0)=\Delta_r-d^{h}_G(V,r),$$
			where the first inequality is guaranteed by Dijkstra and the rest are by definition. Plugging in (\ref{eq:dist-h-in}) for $d_{H_{in}}(r_0,v_i)$ and rearranging gives $d^i_G(v,r)+d^{h-i}_G(V,v)< \Delta_r$, implying that there exists some $v'\in V$ such that an $h$-hop path from $v'$ to $r$ goes through $v$ and has total weight less than $\Delta_r$. Note that this also satisfies property (5).
			\item For all vertices $v\not\in\tilde{V}_r^{in}$, all copies $v_i$ of $v$ satisfy 
			\begin{align*}
				d_{H_{in}}(r_0,v_i)\ge d_{in}\ge -d_{out}+\phi_{in}(r_0)+\phi_{out}(r_0)=\Delta_r-d^h_G(V,r),
			\end{align*}
			where the first inequality is guaranteed by Dijkstra and the rest are by definition. Plugging in (\ref{eq:dist-h-in}) for $d_{H_{in}}(r_0,v_i)$ and rearranging gives $d^i_G(v,r)+d^{h-i}_G(V,v)\ge \Delta_r$ for $i\in\{0,\ldots,h\}$, implying that all $h$-hop paths ending at $r$ and using $v$ have weight at least $\Delta_r$.
			\item For all vertices $v\in\tilde{V}_r^{out}$, some copy $v_i$ of $v$ satisfies
			$$d_{H_{out}}(r_0,v_i)\le d_{out}=-\Delta_r+\phi_{out}(r_0)=-\Delta_r-d^{h}_G(r,V),$$
			where the first inequality is guaranteed by Dijkstra and the rest are by definition. Plugging in (\ref{eq:dist-h-out}) for $d_{H_{out}}(r_0,v_i)$ and rearranging gives $d^{i+1}_G(r,v)+d^{h-i}_G(v,V)\le -\Delta_r$, implying that there exists some $v'\in V$ such that an $(h+1)$-hop path from $r$ to $v'$ goes through $v$ and has total weight at most $-\Delta_r$.
			\item For all vertices $v\not\in\tilde{V}_r^{out}$, all copies $v_i$ of $v$ satisfy
			$$d_{H_{out}}(r_0,v_i)\ge d_{out}=-\Delta_r+\phi_{out}(r_0)=-\Delta_r-d^{h}_G(r,V),$$
			where the first inequality is guaranteed by Dijkstra and the rest are by definition. Plugging in (\ref{eq:dist-h-out}) for $d_{H_{out}}(r_0,v_i)$ and rearranging gives $d^{i+1}_G(r,v)+d^{h-i}_G(v,V)\ge -\Delta_r$ for $i\in\{0,\ldots,h\}$, implying that all $(h+1)$-hop paths starting at $r$ and using $v$ have weight at least $-\Delta_r$. 
		\end{enumerate}
		Finally, we discuss the runtime. Naively, Dijkstra's algorithm on $H_{out}$ runs in $O(|\tilde{V}_r^{out}|\cdot hn+|\tilde{V}_r^{out}|\cdot h\log{n})$ time using a Fibonnaci heap: each of the at most $h\cdot |\tilde{V}_r^{out}|$ processed vertices inserts its outgoing neighbors into the heap, and we pop from the heap once for each processed vertex. We can speed this up to $O(|\tilde{V}_r^{out}|^2\cdot h^2+|\tilde{V}_r^{out}|\cdot h\log{n}|)$ time via the same lazy insertion trick as in \cite{li2025shortcutting}. Instead of inserting all outgoing neighbors immediately after processing a vertex, only insert the unprocessed neighbor with the smallest outgoing edge, and once that neighbor is processed, insert the next one. Here, we are assuming that each vertex has its outgoing edges initially sorted by weight. With this speedup, only $O(|\tilde{V}_r^{out}|^2\cdot h^2)$ edges are ever checked by Dijkstra's algorithm, and the running time follows. The analysis on $H_{in}$ is identical, giving the final result.
	\end{proof}
	
	\begin{lemma}\label[lemma]{lem:V-out-in-size}
		Under the $(h+1)$-hop strong betweenness reduction guarantee of \Cref{lem:strong-bw-reduction} with parameter $b\le k$, we have $\sum_{r\in V}(|\tilde{V}_r^{out}|+|\tilde{V}_r^{in}|)^2\le O(kn^2/b)$.
	\end{lemma}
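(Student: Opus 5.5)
Since $\big||\tilde V_r^{out}|-|\tilde V_r^{in}|\big|\le 1$ by property (\ref{item:dijsktra-both-ways-6}) of \Cref{lem:forward-backward}, we have $(|\tilde V_r^{out}|+|\tilde V_r^{in}|)^2\le O(|\tilde V_r^{in}|\cdot|\tilde V_r^{out}|+1)$. The $+1$ terms sum to $O(k)\le O(kn^2/b)$, so it suffices to bound $\sum_{r\in N}|\tilde V_r^{in}\times\tilde V_r^{out}|$. I will count this by pairs: it equals $\sum_{(s,t)\in V\times V}|\{r\in N: s\in\tilde V_r^{in},\,t\in\tilde V_r^{out}\}|$. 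There are $n^2$ pairs, so the lemma follows once I show that each pair $(s,t)$ is counted for at most $\mathrm{SBW}(s,t)\le k/b$ negative vertices $r$, where $\mathrm{SBW}$ is the $(h+1)$-hop strong betweenness of \Cref{lem:strong-bw-reduction}.

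Fix $(s,t)$ and $r$ with $s\in\tilde V_r^{in}$ and $t\in\tilde V_r^{out}$. By property (\ref{item:dijsktra-both-ways-3}) there is an $h$-hop path $s'\leadsto s\leadsto r$ of weight $\le\Delta_r$. Splitting at $s$ gives $d^h(s',s)+d^h(s,r)\le\Delta_r$. By property (\ref{item:dijsktra-both-ways-1}) there is an $(h+1)$-hop path $r\leadsto t\leadsto t'$ of weight $\le-\Delta_r$, so $d^{h+1}(r,t)+d^{h+1}(t,t')\le-\Delta_r$. By property (\ref{item:dijsktra-both-ways-5}), at least one of these two inequalities is strict. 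Adding them gives
\[
d^{h+1}(s',s)+d^{h+1}(s,r)+d^{h+1}(r,t)+d^{h+1}(t,t')<0,
\]
using that $d^{h+1}\le d^h$. This says exactly that $r$ is counted in the $(h+1)$-hop $\mathrm{SBW}_N(s,t)$, which is at most $k/b$ by the reduction guarantee.

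Summing over the $n^2$ pairs gives $\sum_r|\tilde V_r^{in}||\tilde V_r^{out}|\le n^2k/b$, and hence the claim. The one point needing care is the hop accounting. The segments of a single $h$-hop path through $s$ have at most $h$ hops each, and the out-path has at most $h+1$. The definition of \Cref{lem:strong-bw-reduction} is applied with hop parameter $h+1$, so all four terms must be bounded with $d^{h+1}$; monotonicity of $d^h$ in $h$ handles this. I would also double-check that the properties of \Cref{lem:forward-backward} are measured in the same reweighted graph $\omega_\phi$ in which betweenness was reduced, which holds because the searches are run after reweighting and the inequalities are invariant up to telescoping potentials. Concretely, the potential terms cancel along $s'\to s\to r\to t\to t'$ only up to $\phi(s')-\phi(t')$. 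Both the lemma's properties and $\mathrm{SBW}$ are stated in the same (reweighted) graph, so this is consistent.
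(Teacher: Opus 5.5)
Your proof is correct and matches the paper's argument: for each $r$ you charge every pair $(s,t)\in\tilde V_r^{in}\times\tilde V_r^{out}$ to $r\in\mathrm{SBW}(s,t)$ using properties (1), (3), (5) of \Cref{lem:forward-backward}, sum over the $n^2$ pairs, and convert the product bound into the squared-sum bound via property (6). Your explicit use of $d^{h+1}\le d^h$ to put all four terms at hop bound $h+1$ slightly tidies up the paper's mixed $d^h$/$d^{h+1}$ expression, and the $+1$ terms are absorbed because $b\le k\le n$ gives $k\le kn^2/b$.
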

	\begin{proof}
		We fix a negative vertex $r$ and consider its $\tilde{V}_r^{in}$ and $\tilde{V}_r^{out}$. For any vertex $s\in \tilde{V}_r^{in}$, \Cref{lem:forward-backward} says that we can find an $h$-hop path of weight $\le \Delta_r$ from some vertex to $r$ that uses $v$. Let the start of this path be $s'$, then we have $d^h(s',s)+d^h(s,r)\le \Delta_r$. Similarly, for every vertex $t\in \tilde{V}_r^{out}$, we can find a vertex $t'$ that $d^{h+1}(r,t)+d^{h+1}(t,t')\le -\Delta_r$. Furthermore, by property (\ref{item:dijsktra-both-ways-5}) of \Cref{lem:forward-backward}, one of these inequalities must be strict. Therefore, for every negative vertex $r$ and every $s\in \tilde{V}_r^{in}$ and $t\in \tilde{V}_r^{out}$, there exists $s',t'\in V$ such that
		$$d^h(s',s)+d^h(s,r)+d^{h+1}(r,t)+d^{h+1}(t,t')< 0,$$ 
		so $r\in\mathrm{SBW}(s,t)$. By charging each such $(s,t,r)$ tuple to $r\in\mathrm{SBW}(s,t)$, we have
		\begin{align*}
			\sum_{r\in V} |\tilde{V}_r^{in}|\cdot |\tilde{V}_r^{out}|\le\sum_s\sum_t|\mathrm{SBW}(s,t)|\le n^2\cdot(k/b).
		\end{align*}
		By property (\ref{item:dijsktra-both-ways-6}) of \Cref{lem:forward-backward}, $\left||\tilde{V}_r^{in}|-|\tilde{V}_r^{out}|\right|\le 1$, so 
		\[|\tilde{V}_r^{in}|+|\tilde{V}_r^{out}|\le O\left(\sqrt{|\tilde{V}_r^{in}|\cdot |\tilde{V}_r^{out}|+1}\right),\] 
		where $+1$ handles the case where $|\tilde{V}_r^{in}|=0$ or $|\tilde{V}_r^{out}|=0$. Therefore,
		\begin{align*}
			\sum_{r\in V} (|\tilde{V}_r^{in}|+|\tilde{V}_r^{out}|)^2&\le \sum_{r\in V} \left(\sqrt{|\tilde{V}_r^{in}|\cdot |\tilde{V}_r^{out}|+1}\right)^2\\
			&\le \sum_{r\in V} O(|\tilde{V}_r^{in}|\cdot |\tilde{V}_r^{out}|+1)\le O(kn^2/b)
		\end{align*}
		where the final inequality uses the assumption that $b\le k$. 
	\end{proof}
	
	\subsection{The Shortcutting Algorithm}
	
	With these tools in hand, we present our algorithm to shortcut $G_{\tau}$ by a constant factor. (Recall that $G_{\tau}$ is the graph after ${\tau}$ iterations of shortcutting.) The first step is to reduce the betweenness of the graph. For technical reasons, we don't apply betweenness reduction (\Cref{lem:strong-bw-reduction}) directly to $G_{\tau}$. Define $H_{\tau}$ by adding an edge between every pair of copies of the same vertex, based on how much the copies are shifted. Formally, for every pair of copies $v_i$ and $v_j$ of $v$, we add an edge
	$(v_i,v_j)$ of weight $\delta(v_j)-\delta(v_i)$. This creates more (imaginary) negative edges. Compute reweighting $\phi_{\tau}$ by applying the strong betweenness reduction (\Cref{lem:strong-bw-reduction}) on graph $H_{\tau}$ with parameters $h=2$, $b=k/2^{\sqrt{\log{n}}}$, and $N=N$ is the set of negative vertices. 
	
	Let $H$ and $\Gamma$ denote the graphs $H_{\tau}$ and $G_{\tau}$, respectively, after applying the reweighting $\phi_{\tau}$. Since $G_{\tau}$ is an edge-subgraph of $H_{\tau}$, we know that $\phi_{\tau}$ is a valid reweighting for $G_{\tau}$ as well. In the remainder of the section, we will assume distances are measured in $\Gamma$ unless otherwise specified (i.e., we let $\omega(u,v)=\omega_{\Gamma}(u,v)$ and $d^h(s,t)=d^h_{\Gamma}(s,t)$). After reweighting, $\delta(v_i)$ should be adjusted accordingly for each copy $v_i$ of each $v_0\in V_0$:
	$$\delta_{H}(v_i)=\delta_{\Gamma}(v_i)=\delta(v_i)-\phi_{\tau}(v_i)+\phi_{\tau}(v_0).$$
	It is easy to see that the new shifts satisfy the invariants for $\Gamma$ and $H$: 
    \begin{align*}
        \omega_{\Gamma}(u_i,v_j)&=\omega(u_i,v_j)+\phi_{\tau}(u_i)-\phi_{\tau}(v_j)\\
        &\ge d_{G_{\tau}}(u_0,v_0)+\delta(v_j)-\delta(u_i)+\phi_{\tau}(u_i)-\phi_{\tau}(v_j)\\
        &= d_{\Gamma}(u_0,v_0)-\phi_{\tau}(u_0)+\phi_{\tau}(v_0)+\delta(v_j)-\delta(u_i)+\phi_{\tau}(u_i)-\phi_{\tau}(v_j)\\
        &= d_{\Gamma}(u_0,v_0)+\delta_{\Gamma}(v_j)-\delta_{\Gamma}(u_i)
    \end{align*}
     For simplicity, we will continue to use $\delta(v_i)$ to denote the shifts after the reweighting.
	
	The second step is to compute the forward and backward searches for each $r\in N$ with parameter $h=1$ via \Cref{lem:forward-backward} on the graph $H$ with reduced betweenness. From this, we obtain some $\Delta_r$ for each $r\in N$, which we use to define $V^{out}_r$ and $V^{in}_r$: 
	\begin{align*}
		V^{out}_r&=\{v\in V:d^1(r,v)\le-\Delta_r\}\\
		V^{in}_r&=\{v\in V:d^0(v,r)\le\Delta_r\},
	\end{align*}
	where the inequality in the definition is strict based on which of (\ref{item:dijsktra-both-ways-1}) and (\ref{item:dijsktra-both-ways-3}) is satisfied with strict inequality. Observe that $d^1(r,v)\ge d^2_{H}(r,v)$ and $d^0(v,r)\ge d^1_{H}(v,r)$, so $V^{out}_r\subseteq \tilde{V}_r^{out}$ and $V^{in}_r\subseteq \tilde{V}_r^{in}$. In particular, \Cref{lem:V-out-in-size} also applies to $|V^{in}_r|$ and $|V^{out}_r|$. 
	
	Finally, we need to define $G_{\tau+1}$. We first define an auxiliary graph $G'_{\tau+1}$ with additional (possibly imaginary) negative edges, which is useful for defining $G_{\tau+1}$. Initialize $G'_{\tau+1}$ as $\Gamma$.
	\begin{itemize}
		\item[{\crtcrossreflabel{(S1)}[item:shortcut-step-1]}] For each negative vertex $r\in N$, create a new base vertex $\tilde{r}$, called a Steiner vertex. \item[{\crtcrossreflabel{(S2)}[item:shortcut-step-2]}] For each $r\in N$ and $v\in V^{out}_r\cup\{r'\}$, we add an edge $(\tilde{r},v)$ with weight $d^1(r,v)+\Delta_r$.
		\item[{\crtcrossreflabel{(S3)}[item:shortcut-step-3]}] For each $r\in N$ and $v\in V^{in}_r\cup\{r\}$, we add an edge $(v,\tilde{r})$ with weight $d^0(v,r)-\Delta_r$.
		\item[{\crtcrossreflabel{(S4)}[item:shortcut-step-4]}] For each negative vertex $r\in N$ and each $v\in V^{out}_r$, if $v$ is an endpoint of a negative edge $(v,v')$, add the edge $(r,v')$ to $G'_{\tau+1}$ with weight $d^1(r,v)+\omega(v,v')$.
		\item[{\crtcrossreflabel{(S5)}[item:shortcut-step-5]}] For each negative vertex $r\in N$ and each $u'\in V^{in}_r$, if $u'$ is the endpoint of a negative edge $(u,u')$, add the edge $(u,r')$ to $G_{\tau+1}'$ with weight $\omega(u,u')+d^0(u',r)+\omega(r,r')$. 
	\end{itemize}
	Note that $G'_{\tau+1}$ can be constructed in $O(n^2)$ time easily since we preprocessed using \Cref{lem:one-negative-outgoing-edge}.
	We will call a negative edge incident to a Steiner vertex in $G'_{\tau+1}$ an imaginary negative edge. Similarly to \Cref{sec:technical-overview}, we want to construct $G_{\tau+1}$ such that two-edge non-negative paths in $G'_{\tau+1}$ containing imaginary negative edges are simulated by some path in $G_{\tau+1}$ with only non-negative edges. First, we recall the definition of locally-negative paths:
	
	\begin{definition}
		We say that edges $(u,v)$ and $(v,w)$ form a ``locally-negative two-edge path'' if 
		(1) either $(u,v)$ or $(v,w)$ is an imaginary negative edge and (2) $\omega_{G'_{\tau+1}}(u,v)+\omega_{G'_{\tau+1}}(v,w)\ge 0$.
	\end{definition}
	
	For brevity, we will refer to these paths as \emph{locally-negative paths}, and leave implicit that the paths have two edges. We say that a locally-negative path from $u$ to $w$ in $G'_{\tau+1}$ is \emph{shortcut} if there is a path from $u$ to $w$ in $G_{\tau+1}$ of the same weight, but consisting of only non-negative edges. Our goal is to construct $G_{\tau+1}$ by adding edges and additional copies of vertices, so that all locally-negative paths in $G'_{\tau+1}$ are shortcut. We first show that shortcutting these paths suffices to decrease the number of hops needed for any shortest path in $G_{\tau}$ by a constant factor. The proof is very similar to that of \cite{li2025shortcutting}, which we sketched in \Cref{sec:technical-overview}.

	\begin{lemma}\label[lemma]{lem:shortcut-constant-factor}
		Consider any $s,t\in V$ and any shortest $(s,t)$-path $P$ in $\Gamma$ with $h$ negative edges. If $G_{\tau+1}$ has shortcut edges \ref{item:shortcut-step-4} and \ref{item:shortcut-step-5}, and shortcuts every locally-negative path in $G'_{\tau+1}$, then there is an $(s,t)$-path in $G_{\tau+1}$ with weight $d_{\Gamma}(s,t)$ and with at most $h-\lfloor h/3\rfloor$ negative edges.
	\end{lemma}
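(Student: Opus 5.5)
I will follow the three-case argument sketched for \Cref{lem:shortcut}, now carried out in $G'_{\tau+1}$ and then transferred to $G_{\tau+1}$. Write the negative edges of $P$ in order from $s$ to $t$, and split the first $3\lfloor h/3\rfloor$ of them into $\lfloor h/3\rfloor$ disjoint groups of three consecutive negative edges. For each group $(u,u'),(r,r'),(v,v')$, I will replace the subpath of $P$ from $u$ to $v'$ by a $(u,v')$-path of the same weight that uses at most two negative edges. All other edges of $P$ are left unchanged. The replacements act on disjoint segments, so concatenating them gives an $(s,t)$-path of weight $d_\Gamma(s,t)$ with at most $h-\lfloor h/3\rfloor$ negative edges.

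Since $P$ is a shortest path and contains no imaginary edges, every subpath of $P$ is shortest in $\Gamma$. The $u'\leadsto r$ segment has no negative edges, so its weight is $d^0(u',r)$. The $r\leadsto v$ segment has exactly one negative edge, namely $(r,r')$, so its weight is $d^1(r,v)$. Every intermediate vertex on these segments has the corresponding prefix and suffix distances given by the subpath weights. I then split into cases on $\Delta_r$, with the boundary of each case set by which of properties (1) and (3) of \Cref{lem:forward-backward} is strict, so that the cases match the definitions of $V^{in}_r$ and $V^{out}_r$.
\begin{itemize}
\item \textbf{Case 1:} $u'\in V^{in}_r$. Step \ref{item:shortcut-step-5} adds the edge $(u,r')$ of weight $\omega(u,u')+d^0(u',r)+\omega(r,r')$. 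This is exactly the weight of the $u\leadsto r'$ segment, so that segment is replaced by one negative edge.
\item \textbf{Case 2:} $v\in V^{out}_r$. Step \ref{item:shortcut-step-4} adds the edge $(r,v')$, which symmetrically replaces the $r\leadsto v'$ segment by one negative edge.
\item \textbf{Case 3:} $u'\notin V^{in}_r$ and $v\notin V^{out}_r$. This case needs the most care.
\end{itemize}

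In Case 3, let $x$ be the last vertex on the $u'\leadsto r$ segment that is not in $V^{in}_r$; it exists because $u'\notin V^{in}_r$. Let $x^+$ be the vertex after $x$, so $x^+\in V^{in}_r\cup\{r\}$ and step \ref{item:shortcut-step-3} gives the imaginary edge $(x^+,\tilde r)$. Then $(x,x^+),(x^+,\tilde r)$ has total weight $\omega(x,x^+)+d^0(x^+,r)-\Delta_r=d^0(x,r)-\Delta_r\ge0$, using that $x\notin V^{in}_r$. If $(x^+,\tilde r)$ is negative, this is a locally-negative path; otherwise both edges are non-negative. Symmetrically, let $y$ be the first vertex on the $r'\leadsto v$ segment that is not in $V^{out}_r$. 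Its predecessor $y^-$ lies in $V^{out}_r\cup\{r'\}$, and this is exactly why \ref{item:shortcut-step-2} includes $r'$. The two-edge path $(\tilde r,y^-),(y^-,y)$ has weight $d^1(r,y)+\Delta_r\ge0$. By hypothesis both pieces are shortcut in $G_{\tau+1}$ by non-negative paths of the same weight. Concatenating them replaces the $x\leadsto y$ segment, which contains the negative edge $(r,r')$, by a path with no negative edges and weight $d^0(x,r)+d^1(r,y)$, equal to the original segment weight. The group then keeps at most two negative edges.

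The main obstacle is the boundary handling in Case 3, when some of these vertices coincide. If $x^+=r$, the imaginary edge is $(r,\tilde r)$ with weight $-\Delta_r$, and I need to check that the weight identity still holds. If $y^-=r'$ and $r'\notin V^{out}_r$, the edge $(\tilde r,r')$ from \ref{item:shortcut-step-2} must still be used. I also need the Case 3 inequalities to come out non-strict in the right direction for whichever of properties (1) and (3) is strict. Finally, $(x,x^+)$ must be an edge of $\Gamma$ and hence present in $G'_{\tau+1}$, which holds because $P$ lies in $\Gamma$. I will verify each boundary configuration by direct substitution into the weight identities.
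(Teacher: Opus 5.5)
Your proposal is correct and is essentially the paper's proof. It uses the same grouping into triples, the (S5) and (S4) edges in the first two cases, and the same pair of two-edge paths through $\tilde r$ in the third, and your split on membership in $V^{in}_r$ and $V^{out}_r$ is equivalent to the paper's split on $d^0(u',r)$ versus $\Delta_r$ and $d^1(r,v)$ versus $-\Delta_r$. One tidy-up: the configurations that actually escape your two-edge argument are $x=r$ (i.e.\ $r\notin V^{in}_r$, so there is no vertex after $x$ on the segment) and $y=r'$ (i.e.\ $r'\notin V^{out}_r$), not $x^+=r$ or $y^-=r'$; there you use, as the paper does, the single edges $(r,\tilde r)$ of weight $-\Delta_r\ge 0$ and $(\tilde r,r')$ of weight $\omega(r,r')+\Delta_r\ge 0$, which are non-negative and therefore kept in $G_{\tau+1}$.
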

	\begin{proof}
		
		Consider the negative edges of $P$ in order from $s$ to $t$, and create $\lfloor h/3\rfloor$ disjoint groups of three consecutive negative edges. Let $(s,s')$, $(r,r')$, and $(t,t')$ be such a group of three consecutive negative edges. We claim there is an $s\leadsto t'$ path in $G_{\tau+1}$ with the same weight as the $s\leadsto t'$ path in $\Gamma$. Combining the shortcut paths of each group will imply the desired result. Our proof will be split into three cases based on $\Delta_r$. The first two cases will have essentially the same proof as in \cite{li2025shortcutting}; the third case is the more difficult one.

		\begin{enumerate}
			\item $d^0(s',r)<\Delta_r$. We have that $s'\in V^{in}_r$ so we add an edge $(s,r')$ with weight $\omega(s,s')+d^0(s',r)+\omega(r,r')$, which is at most the weight of the $s\leadsto r'$ segment of $P$ since the negative edges $(s,s')$ and $(r,r')$ are consecutive along $P$. Therefore, we can shortcut the $s\leadsto r'$ segment of $P$, which has two negative edges, using the single (negative) edge $(s,r')$.
			\item $d^1(r,t)<-\Delta_r$. We have $t\in V^{out}_r$ so we add an edge $(r,t')$ of weight $d^1(r,t)+\omega(t,t')$, which is at most the weight of the $r\leadsto t'$ segment of $P$ since the negative edges $(r,r')$ and $(t,t')$ are consecutive along $P$. Therefore, we can shortcut the $r\leadsto t'$ segment of $P$, which has two negative edges, using the single (negative) edge $(r,t')$.  
			\item $d^0(s',r)\ge \Delta_r$ and $d^1(r,t)\ge-\Delta_r$. Consider the $s'\leadsto r$ segment of $P$ and let $u$ be the last vertex on this segment with $d^0(u,r)\ge\Delta_r$. Then either $u=r$ or the vertex $u'$ after $u$ is in $V^{in}_r$. In the first case, there is an edge $(r,\tilde{r})$ added of weight $d^0(r,r)-\Delta_r=-\Delta_r\ge 0$, since the only way the first case occurs is when $\Delta_r\le 0$. In the second case, there is an imaginary negative edge $(u',\tilde{r})$ with weight $d^0(u',r)-\Delta_r<0$ added to $G'_{\tau+1}$, so the $u\to u'\to\tilde{r}$ path has total weight
			$$\omega(u,u')+\omega(u',\tilde{r})=\omega(u,u')+d^0(u',r)-\Delta_r= d^0(u,r)-\Delta_r\ge 0,$$
            where the second equality holds because there are no negative edges between $u$ and $r$ in the shortest $(s,t)$-path $P$.
			In particular, $u\to u'\to \tilde{r}$ is a locally-negative path in $G'_{\tau+1}$, so there is a path between $u$ and $\tilde{r}$ of non-negative edges in $G_{\tau+1}$ of the same weight. 
			
			Similarly, consider the $r'\leadsto t$ segment of $P$ and let $v'$ be the first vertex on this segment  with $d^1(r,v')\ge -\Delta_r$. Then either $v'=r'$ or the vertex $v$ before $v'$ is in $V^{out}_r$. In the first case, there is an edge $(\tilde{r},r')$ added of weight $\omega(r,r')+\Delta_r\ge 0$, since the only way the first case occurs is when $\Delta_r\ge -\omega(r,r')$. In the second case, there is an imaginary negative edge $(\tilde{r},v)$ added of weight $d^1(r,v)+\Delta_r<0$, so the $\tilde{r}\to v\to v'$ path has total weight
			$$\omega(\tilde{r},v)+\omega(v,v')=d^1(r,v)+\omega(v,v')+\Delta_r=d^1(r,v')+\Delta_r\ge 0.$$
			In particular, $\tilde{r}\to v\to v'$ forms a locally-negative path in $G'_{\tau+1}$, so there is a path between $\tilde{r}$ and $v'$ of non-negative edges in $G_{\tau+1}$ of the same weight.
			
			We wish to show that the concatenation of the paths $u\leadsto\tilde{r}$ and $\tilde{r}\leadsto v'$ in the previous paragraphs yields a path $u\leadsto \tilde{r}\leadsto v'$ of non-negative edges in $G_{\tau+1}$ of total weight equal to the $u\leadsto r\leadsto v'$ segment of $P$. The $u\leadsto\tilde{r}$ path has total weight $d^0(u,r)-\Delta_r$, and the $\tilde{r}\leadsto v'$ path has total weight $d^1(r,v')+\Delta_r$, so the concatenated path has total weight equal to the $u\leadsto r\leadsto v'$ segment of $P$.
		\end{enumerate}
		Concatenating together each of the 2-hop paths in $G_{\tau+1}$ which shortcut each of the $\lfloor h/3\rfloor$ disjoint 3-hop paths in $G_{\tau}$ proves the claim.
	\end{proof}
	
	The remainder of the section presents the last step of our shortcutting algorithm, which constructs $G_{\tau+1}$ satisfying the properties assumed in \Cref{lem:shortcut-constant-factor}. We first present a simpler construction, with a slower runtime of $n^{7/3+o(1)}$, to present the main ideas. We then use a bucketing trick to speed up the construction, giving an $n^{2+o(1)}$ time algorithm.
	
	\subsubsection{A Simpler \texorpdfstring{$n^{7/3+o(1)}$}~~Construction}
	
	This subsection will be dedicated to proving the following lemma.

	\begin{lemma}\label[lemma]{lem:shortcut-good-paths}
		There is an $n^{7/3+o(1)}$ time algorithm  which constructs a graph $G_{\tau+1}$ satisfying:
		\begin{enumerate}
			\item every locally-negative two-edge path in $G'_{\tau+1}$ is shortcut in $G_{\tau+1}$,
			\item $G_{\tau+1}$ has at most $2$ more copies of each original vertex, and
			\item $G_{\tau+1}$ has shortcut edges \ref{item:shortcut-step-4} and \ref{item:shortcut-step-5}.
		\end{enumerate} 
	\end{lemma}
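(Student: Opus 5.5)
We follow the sketch in \Cref{sec:technical-overview}. We treat the in-side (imaginary edges $(v,\tilde r)$ with $v\in V^{in}_r$) and the out-side (imaginary edges $(\tilde r,v)$ with $v\in V^{out}_r$) symmetrically. Each side contributes one new copy of each original vertex, giving property~(2). We first add all non-negative edges of $G'_{\tau+1}$ to $G_{\tau+1}$, including the edges from \ref{item:shortcut-step-4} and \ref{item:shortcut-step-5}, which gives property~(3) in $O(n^2)$ time.

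\textbf{Construction on the in-side.} For every base vertex $v$ with $v\in V^{in}_r$ for some $r$, set $\delta(v_{in})=\min_{r: v\in V^{in}_r}\omega(v,\tilde r)$, and add $v_{in}$ as a copy of $v$ via \Cref{cl:add-copy}. By the invariants, this copy inherits every non-negative in-edge $(u,v_{in})$ obtainable from some existing copy $v_i$. We then add the edges $(v_{in},\tilde r)$ of weight $\omega(v,\tilde r)-\delta(v_{in})\ge 0$. For a locally-negative path $u\to v_i\to\tilde r$ with $\omega(u,v_i)+\omega(v_i,\tilde r)\ge 0$, the route $u\to v_{in}\to\tilde r$ has the right weight. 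It shortcuts the path whenever $\omega(u,v_i)+\delta(v_{in})-\delta(v_i)\ge 0$ (taking shifts into account, with $\delta$ relative to $v_i$).

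\textbf{The bad case.} Here $\omega(u,v_i)+\delta(v_{in})-\delta(v_i)<0$. Let $r^*$ attain the minimum in $\delta(v_{in})$. Then $\omega(u,v_i)+\omega(v_i,\tilde r^*)<0$, so
\[
\omega(u,v_i)+d^0(v,r^*)<\Delta_{r^*}.
\]
We must show this forces $u\in\tilde V^{in}_{r^*}$ for some $r^*$. This is exactly why $H_\tau$ contains the edges between copies and why we used $h=1$ in \Cref{lem:forward-backward} on $H$. The edge $u\to v_i$, followed by the copy edge $v_i\to v_0$ and then a $0$-hop path to $r^*$, gives an at-most-1-hop path in $H$ through $u$ of weight below $\Delta_{r^*}$. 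By property~(4) of \Cref{lem:forward-backward}, this places $u\in\tilde V^{in}_{r^*}$.

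We then brute force. For every $r$, every $v\in V^{in}_r$ (over all of its copies $v_i$), every $r^*$, and every $u\in\tilde V^{in}_{r^*}$ with $(u,v_i)$ an edge and $\omega(u,v_i)+\omega(v_i,\tilde r)\ge 0$, we add the edge $(u,\tilde r)$ directly. The out-side is symmetric, using $(h+1)=2$-hop out-searches.

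\textbf{Main obstacle: the running time.} The brute force pairs $|V^{in}_r|$ with $|\tilde V^{in}_{r^*}|$, but $r^*$ ranges over all negative vertices associated with $v$, so the cost is not simply $\sum_r|V_r|^2$. We handle this by charging. Each $v$ has a single designated $r^*(v)$, so for fixed $r$ the work is
\[
\sum_{v\in V^{in}_r}\bigl|\tilde V^{in}_{r^*(v)}\bigr|\cdot c .
\]
We then split on whether $|\tilde V^{in}_{r^*}|\le n^{1/3}$ or not. Small searches cost at most $\sum_r|V_r|\, n^{1/3}\le n\cdot k\cdot n^{1/3}$. For large searches, \Cref{lem:V-out-in-size} with $b=k/2^{\sqrt{\log n}}$ gives $\sum_r|\tilde V_r|^2\le n^{2+o(1)}$, so there are at most $n^{4/3+o(1)}$ large-search mass units. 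For those, we instead enumerate the pairs from $u\in\tilde V^{in}_{r^*}$ across all $v$, costing $n^{2/3}\cdot n^{5/3}$. Balancing the two cases yields $n^{7/3+o(1)}$, together with the $n^{2+o(1)}$ cost of computing the searches themselves via \Cref{lem:forward-backward}. Verifying that the threshold balancing is really what gives $7/3$ is the step I am least sure of. If it fails, the fallback is to bound $\sum_r\sum_{v\in V_r}|\tilde V_{r^*(v)}|$ by Cauchy–Schwarz against $\sum|\tilde V|^2$.
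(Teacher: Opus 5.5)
\textbf{The running-time analysis does not go through.} Your shortcutting argument for the in-side matches the paper's Step 2: you add the copy $v_{in}$, and in the bad case an at-most-one-hop path in $H$ places $u$ in $\tilde V^{in}_{r^*}$. One detail needs fixing. The minimum defining $\delta(v_{in})$ must range over copies $v_j$ with value $\omega(v_j,\tilde r)+\delta(v_j)$, and the $H_\tau$ edge you use goes from $v_i$ to the minimizing copy $v_j$, not to $v_0$.

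The real problem is the time bound. Your $\delta(v_{in})$ is a minimum over \emph{all} $r$, so $r^*(v)$ is selected by weight, not by size. It may be a negative vertex whose search has size $\approx n$, and it may be the designated vertex for every $v$ simultaneously. Your brute force then costs
\[
\sum_{r}\sum_{v\in V^{in}_r}\bigl|\tilde V^{in}_{r^*(v)}\bigr|\approx n\sum_r|V^{in}_r|.
\]
This is about $n^{5/2}$ when, say, $k\approx n$ and $|V^{in}_r|\approx\sqrt n$ for all $r$, plus one $r$ with a search of size $\approx n$. That configuration is fully consistent with $\sum_r(|\tilde V^{in}_r|+|\tilde V^{out}_r|)^2\le n^{2+o(1)}$ from \Cref{lem:V-out-in-size}. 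Splitting on whether $|\tilde V^{in}_{r^*}|\le n^{1/3}$ does not help. Enumerating ``from $u\in\tilde V^{in}_{r^*}$ across all $v$'' visits exactly the same triples $(r,v,u)$. Nothing bounds the number of pairs $(r,v)$ assigned to a single large $r^*$ by $n^{2/3}$; it can be $n^{3/2}$. The Cauchy--Schwarz fallback fails for the same reason and gives only $n^{5/2}$, which is no improvement over the original construction.

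\textbf{Missing idea.} Threshold on the \emph{target} negative vertex $r$, and exclude the large ones from the shift. Let $N^*=\{r\in N:|\tilde V^{in}_r|+|\tilde V^{out}_r|\ge\kappa\}$.

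For $r\in N^*$, shortcut every locally-negative path $u\to v_i\to\tilde r$ and $\tilde r\to v_i\to w$ directly, by scanning all $O(n)$ neighbors of each $v_i$. By \Cref{lem:V-out-in-size} and Jensen, $|N^*|\le n^{2+o(1)}/\kappa^2$ and $\sum_{r\in N^*}|\tilde V^{in}_r|\le n^{2+o(1)}/\kappa$, so this costs $n^{3+o(1)}/\kappa$.

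Then define $\delta(v_{in})$ (resp.\ $\delta(v_{out})$) with the min (resp.\ max) taken only over $r\notin N^*$. The edge $(v_{in},\tilde r)$ still exists for every $r\notin N^*$. Crucially, the designated $r'$ now also lies outside $N^*$, so $|\tilde V^{in}_{r'}|\le\kappa$. The bad-case brute force therefore costs $\sum_{r\notin N^*}|V^{in}_r|\cdot\kappa\le\tilde O(n\kappa^2)$. Balancing $n^{3}/\kappa$ against $n\kappa^2$ at $\kappa=n^{2/3}$ is exactly where the exponent $7/3$ comes from.
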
    
	\begin{proof}
		Initialize $G_{\tau+1}$ as the graph $G'_{\tau+1}$. We perform the following three steps to shortcut all locally-negative two-edge paths. After that, we remove the imaginary negative edges (i.e., the shortcut edges added in \ref{item:shortcut-step-2} and \ref{item:shortcut-step-3}) in $G_{\tau+1}$ to finish the construction. The shortcut edges added in \ref{item:shortcut-step-4} and \ref{item:shortcut-step-5} are kept since they are non-negative.

		\paragraph{Step 1.} Let $\kappa$ be a parameter we choose later, and let $N^*=\{r\in N:|\tilde{V}_r^{in}|+|\tilde{V}_r^{out}|\ge\kappa\}$. For each negative vertex $r\in N^*$, consider any in-edge $(v_i,\tilde{r})$ of $\tilde{r}$ and any non-negative in-edge $(u,v_i)$ of $v_i$. If $u\to v_i\to\tilde{r}$ forms a locally-negative path, we add an edge $(u,\tilde{r})$ with weight $\omega(u,v_i)+\omega(v_i,\tilde{r})$, shortcutting the locally-negative path. Symmetrically, for each vertex $r\in N^*$, consider any out-edge $(\tilde{r},v_i)$ of $\tilde{r}$ and any out-edge $(v_i,w)$ of $v_i$. If $\tilde{r}\to v_i\to w$ forms a locally-negative path, we add an edge $(\tilde{r},w)$ to $G_{\tau+1}$ with weight $\omega(\tilde{r},v_i)+\omega(v_i,w)$. 
		
		Now, we analyze the runtime of this step. For each in-edge $(v_i,\tilde{r})$, there are at most $n$ possible in-edges $(u,v_i)$ and the number of edges of the form $(v_i,\tilde{r})$ is $\sum_{r\in N^*}|\tilde{V}_r^{in}|$. To bound this summation, observe that since $|\tilde{V}_r^{in}|+|\tilde{V}_r^{out}|\ge\kappa$ for each negative vertex $r\in N^*$, \Cref{lem:V-out-in-size} with $b=k/2^{\sqrt{\log{n}}}$ implies that $|N^*|\le n^{2+o(1)}/\kappa^2$. Furthermore, we can apply Jensen's inequality and \Cref{lem:V-out-in-size} to obtain \begin{align}
			\Big(\frac{1}{|N^*|}\sum_{r\in N^*}|\tilde{V}_r^{in}|\Big)^2\le\frac{1}{|N^*|}\sum_{r\in N^*}|\tilde{V}_r^{in}|^2\le \frac{n^{2+o(1)}}{|N^*|}.
		\end{align}
		Taking square roots on both sides, multiplying both sides by $|N^*|$, and using $|N^*|\le n^{2+o(1)}/\kappa^2$, we get that $\sum_{r\in N^*}|\tilde{V}_r^{in}|\le n^{2+o(1)}/\kappa$ so the total runtime of this step is $n^{3+o(1)}/\kappa$. The proof for the out-searches from $r\in N^*$ is symmetric.

    \begin{figure}[t]\centering
        \includegraphics[scale=.8]{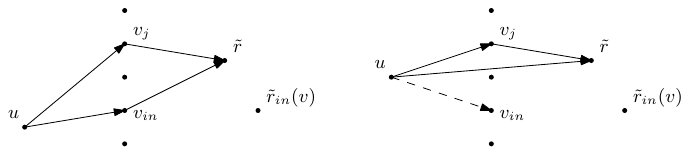}
        \caption{An illustration of steps 2 and 3 from \Cref{lem:shortcut-good-paths}, where height indicates cumulative distance, and points on the same vertical line are copies of the same vertex. The left part shows the case where the edge $(u,v_{in})$ exists for a locally-negative path $u\to v_i\to\tilde{r}$, and such paths are shortcut in step 2. The right part shows the case where this edge does not exist. In this case, $u\in \tilde V_{in}(r_{in}(v))$, and such paths are shortcut in step 3.}
        \label{fig:shortcut}
    \end{figure}
    
		\paragraph{Step 2.} Now, all locally-negative paths which aren't shortcut already must be of the form:
		\begin{itemize}
			\item $u\to v_i\to\tilde{r}$, where $r\not\in N^*$ so $|\tilde{V}_r^{in}|\le \kappa$.
			\item $\tilde{r}\to v_i\to w$, where $r\not\in N^*$ so $|\tilde{V}_r^{out}|\le\kappa$.
		\end{itemize}
		For each original vertex $v_0\in V$, we add a new copy $v_{in}$ to $G_{\tau+1}$ with shift        
		$$\delta(v_{in})=\min_{j\in[c],r\not\in N^*:v_{j}\in {V}^{in}_r}\omega(v_{j},\tilde{r})+\delta(v_{j})$$
		and a new copy $v_{out}$ to $G_{\tau+1}$ with shift
		$$\delta(v_{out})=\max_{j\in[c],r\not\in N^*:v_{j}\in V^{out}_r}-\omega(\tilde{r},v_{j})+\delta(v_{j}).$$
		We add all possible non-negative edges to and from $v_{in}$ and $v_{out}$, as defined in \Cref{cl:add-copy}. By \Cref{cl:add-copy}, we know that invariants \ref{item:graph-invariant-1}--\ref{item:graph-invariant-2} are satisfied. We further show that all edges to Steiner nodes $\tilde{r}$ are ``copied'' to $v_{in}$ and all edges from Steiner nodes $\tilde{r}$ are ``copied'' to $v_{out}$. 
		
		\begin{subclaim}\label[subclaim]{cl:fake-edges-copied}
			We have the following:
			\begin{itemize}
				\item If $(v_i,\tilde{r})\in G'_{\tau+1}$ for $r\not\in N^*$, then $\omega(v_i,\tilde{r})+\delta(v_i)-\delta(v_{in})\ge0$ so $(v_{in},\tilde{r})\in G_{\tau+1}$.
				\item If $(\tilde{r},v_i)\in G'_{\tau+1}$ for $r\not\in N^*$, then $\omega(\tilde{r},v_i)+\delta(v_{out})-\delta(v_i)\ge 0$, so $(\tilde{r},v_{out})\in G_{\tau+1}$.
			\end{itemize}
		\end{subclaim}
		\begin{subproof}
			Writing out the definition, we have
			\begin{align*}
				\omega(v_i,\tilde{r})+\delta(v_i)-\delta(v_{in})&=\max_{j\in[c],r'\not\in N^*:v_{j}\in {V}^{in}_{r'}}\omega(v_i,\tilde{r})-\omega(v_j,\tilde{r}')+\delta(v_i)-\delta(v_{j}).
			\end{align*}
			In the maximization, taking $j=i$ and $r'=r$ satisfies the constraints and gives value $0$, so the optimal choice of $j$ and $r'$ must yield a non-negative value. Thus, $C(v,\tilde{r})$ is non-empty in \Cref{cl:add-copy}, so edge $(v_{in},\tilde{r})$ is added to $G_{\tau+1}$. The proof of the second bullet is symmetric.       
		\end{subproof}
		
		We analyze what locally-negative paths are shortcut by Step 2.
		Consider a locally-negative path $u\to v_i\to \tilde{r}$, with $r\not\in N^*$. By \Cref{cl:fake-edges-copied}, we add the (non-negative) edge $(v_{in},\tilde{r})$ to $G_{\tau+1}$. Thus, if edge $(u,v_{in})$ is also added, we successfully shortcut $u\to v_i\to\tilde{r}$ using $u\to v_{in}\to\tilde{r}$. Symmetrically, consider a locally-negative path $\tilde{r}\to v_i\to w$. By \Cref{cl:fake-edges-copied}, we add the (non-negative) edge $(\tilde{r},v_{out})$ to $G_{\tau+1}$. Thus, if $(v_{out},w)$ is also added, we successfully shortcut $\tilde{r}\to v_i\to w$ using $\tilde{r}\to v_{out}\to w$. Thus, the remaining locally-negative paths which aren't already shortcut must be of the form:
		\begin{itemize}
			\item $u\to v_i\to \tilde{r}$, where $(u,v_{in})\not\in G_{\tau+1}$ and $r\not\in N^*$ so $|\tilde{V}_r^{in}|\le\kappa$.
			\item $\tilde{r}\to v_i\to w$, where $(v_{out},w)\not\in G_{\tau+1}$ and $r\not\in N^*$ so $|\tilde{V}_r^{out}|\le\kappa$.
		\end{itemize}
		We show that after adding the two copies, the remaining locally-negative paths are extremely structured, so we can brute force over them easily.
		\begin{subclaim}\label[subclaim]{cl:brute-force}
			For any such path $u\to v_i\to\tilde{r}$, we have $u\in\tilde{V}_{r'}^{in}$ for some $r'\not\in N^*$. For any such path $\tilde{r}\to v_i\to w$, we have $w\in\tilde{V}_{r'}^{out}$ for some $r'\not\in N^*$.
		\end{subclaim}
		\begin{subproof}
			First, consider a path $u\to v_i\to\tilde{r}$ and consider the negative vertex $r'$ such that
			\begin{align}\label{eq:r-prime-in}
				(j,r')=\arg\min_{j\in[c],r\not\in N^*:v_{j}\in V^{in}_r}\omega(v_{j},\tilde{r})+\delta(v_{j}).
			\end{align}
			Since $v_j\in V^{in}_{r'}$, there is a 0-hop path $v_j\leadsto r'$ in $\Gamma$ of weight $\omega(v_{j},\tilde{r}')+\Delta_{r'}$. Extend this path in $H$ via edges $(u,v_i)$ and $(v_i,v_j)$ to the path $u\to v_i\to v_j\leadsto r'$, which we denote by $P$. Since $\omega(u,v_i)\ge 0$ by our construction of locally-negative path $u\to v_i\to\tilde{r}$, this path has at most 1 negative edge ($v_i\to v_j$). Furthermore, the weight is
			$$\omega(P)= \omega(u,v_i)+\delta(v_j)-\delta(v_i)+\omega(v_j,\tilde{r}')+\Delta_{r'}.$$
			We wish to show that $\omega(P)<\Delta_{r'}$, or equivalently that $\omega(u,v_i)+\delta(v_j)-\delta(v_i)+\omega(v_j,\tilde{r}')<0$.
			
			Since edge $(u,v_{in})\not\in G_{\tau+1}$ was not added when using \Cref{cl:add-copy}, we know $C(u,v)=\emptyset$ which in particular implies that $\omega(u,v_i)+\delta(v_{in})-\delta(v_i)<0$. We also know by definition of $(j,r')$ that
			$\omega(v_j,\tilde{r}')+\delta(v_j)-\delta(v_{in})=0.$ Adding these two and canceling out $\delta(v_{in})$ gives
			$$\omega(u,v_i)-\delta(v_i)+\omega(v_j,\tilde{r}')+\delta(v_j)<0,$$
			which implies that $\omega(P)<\Delta_{r'}$, so there is a $1$-hop path from $u$ to $r'$ of weight less than $\Delta_{r'}$ and $u\in\tilde{V}_{r'}^{in}$. By a symmetric argument, we have that any locally-negative path $\tilde{r}\to v_i\to w$ which isn't shortcut already has the property that $w\in\tilde{V}_{r'}^{out}$, where 
			\begin{align}\label{eq:r-prime-out}
				(j,r')=\arg\max_{j\in[c],r\not\in N^*:v_{j}\in V^{out}_r}-\omega(\tilde{r},v_{j})+\delta(v_{j}).
			\end{align}
			We remark that the desired $r'$ for a given $v_i$ can be computed in $\tilde{O}(1)$ time in both cases.
		\end{subproof}

		\paragraph{Step 3.} To shortcut the remaining locally-negative paths, we brute force. For each $r\not\in N^*$ and each $v_i\in V^{in}_r$, consider each $u\in\tilde{V}_{r'}^{in}$ where $r'$ is as defined in (\ref{eq:r-prime-in}). If edges $(u,v_i)$ and $(v_i,\tilde{r})$ both exist and $\omega(u,v_i)+\omega(v_i,\tilde{r})\ge 0$, add an edge $(u,\tilde{r})$ to $G_{\tau+1}$ of weight $\omega(u,v_i)+\omega(v_i,\tilde{r})$. By \Cref{cl:brute-force}, this process successfully shortcuts all remaining locally-negative paths $u\to v_i\to \tilde{r}$. Symmetrically, for each $r\not\in N^*$ and each $v_i\in V^{out}_r$, we can consider each $u\in\tilde{V}_{r'}^{out}$ where $r'$ is as defined in (\ref{eq:r-prime-out}). If edges $(\tilde{r},v_i)$ and $(v_i,w)$ both exist and $\omega(\tilde{r},v_i)+\omega(v_i,w)\ge 0$, add an edge $(\tilde{r},w)$ to $G_{\tau+1}$ of weight $\omega(\tilde{r},v_i)+\omega(v_i,w)$. Again by \Cref{cl:brute-force}, this process successfully shortcuts all remaining locally-negative paths $\tilde{r}\to v_i\to w$. Since $|V^{in}_r|,|\tilde{V}_{r'}^{in}|\le\kappa$ for $r,r'\not\in N^*$ and $|V^{out}_r|,|\tilde{V}_{r'}^{out}|\le\kappa$ for $r,r'\not\in N^*$, this final shortcutting step takes $\tilde{O}(n\cdot\kappa^2)$ time. Choosing $\kappa=n^{2/3}$ balances the two terms, giving a final runtime of $n^{7/3+o(1)}$.
	\end{proof}

	\subsubsection{The \texorpdfstring{$n^{2+o(1)}$}~~Algorithm}
	
	In this subsection, we refine the ideas in \Cref{lem:shortcut-good-paths} to give an $n^{2+o(1)}$ algorithm. The remainder of the subsection will be dedicated to proving the following lemma.
	
	\begin{lemma}\label[lemma]{lem:faster-shortcut-good-paths}
		There is an $n^{2+o(1)}$ time algorithm which constructs a graph $G_{\tau+1}$ satisfying:
		\begin{enumerate}
			\item shortest path distances are preserved: $d_{G_{\tau+1}}(s,t)=d_{\Gamma}(s,t),$
			\item every locally-negative two-edge path in $G'_{\tau+1}$ is shortcut in $G_{\tau+1}$,
			\item $G_{\tau+1}$ has at most $O(\log{n})$ more copies of each original vertex,
			\item $G_{\tau+1}$ has shortcut edges \ref{item:shortcut-step-4} and \ref{item:shortcut-step-5}.
		\end{enumerate}
	\end{lemma}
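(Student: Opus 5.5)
The plan is to keep the three-step structure of \Cref{lem:shortcut-good-paths}, but to replace the single threshold $\kappa$ by a geometric bucketing of the negative vertices by search size. Write $s_r=|\tilde{V}_r^{in}|+|\tilde{V}_r^{out}|$. For $\ell=0,1,\ldots,O(\log n)$ let $N_\ell=\{r\in N:2^{\ell}\le s_r<2^{\ell+1}\}$. By \Cref{lem:V-out-in-size} with $b=k/2^{\sqrt{\log n}}$ we get $|N_\ell|\le n^{2+o(1)}/4^{\ell}$.

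For each bucket $\ell$ we add a separate pair of copies $v_{in}^{(\ell)},v_{out}^{(\ell)}$ of every base vertex $v_0$. Their shifts are defined exactly as in Step 2, except that the minimum (resp.\ maximum) ranges only over $r\in N_\ell$. All non-negative edges are added as in \Cref{cl:add-copy}. This gives $O(\log n)$ new copies per vertex, which is property (3). Property (1) follows from \Cref{cl:add-copy}: invariant \ref{item:graph-invariant-2} guarantees that no edge incident to a copy undercuts $d(u_0,v_0)$ plus shifts. Every other edge we add has weight equal to that of a real two-edge path in $G'_{\tau+1}$, or is one of the edges \ref{item:shortcut-step-4}/\ref{item:shortcut-step-5}, or an \ref{item:shortcut-step-2}/\ref{item:shortcut-step-3} edge through $\tilde r$. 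These last have the weight of an actual $\Gamma$-path, because the edge into $\tilde r$ and the edge out of $\tilde r$ sum to $d^0(\cdot,r)+d^1(r,\cdot)$. Since we finally delete the imaginary negative edges, distances are therefore preserved. Property (4) holds because we never remove \ref{item:shortcut-step-4}/\ref{item:shortcut-step-5}.

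For shortcutting, fix a locally-negative path $u\to v_i\to\tilde r$ with $r\in N_\ell$. The argument of \Cref{cl:fake-edges-copied} shows that $(v_{in}^{(\ell)},\tilde r)$ exists. If $(u,v^{(\ell)}_{in})$ also exists, the path is shortcut. Otherwise, \Cref{cl:brute-force} applied within bucket $\ell$ yields $u\in\tilde{V}^{in}_{r'}$, where $r'\in N_\ell$ is the argmin for $v$ at level $\ell$. So we brute force, for each $r\in N_\ell$ and $v_i\in V^{in}_r$, over $u\in\tilde V^{in}_{r'}$. This costs
\[
\sum_{r\in N_\ell}|V^{in}_r|\cdot 2^{\ell+1}\le |N_\ell|\cdot 4^{\ell+1}\le n^{2+o(1)}
\]
per bucket. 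The crucial point is that $r'$ lies in the same bucket as $r$, so $|\tilde V^{in}_{r'}|\le 2^{\ell+1}$. Summing over $O(\log n)$ buckets, with the symmetric out-case and the $\tilde O(1)$ lookups of $r'$ (and a factor $c=O(\log^2 n)$ for copies $v_i$), gives $n^{2+o(1)}$ in total. Computing the shifts takes $\sum_r s_r\le n^{2+o(1)}$, and adding edges via \Cref{cl:add-copy} takes $O(n\cdot c)$ per new copy, so $\tilde O(n^2)$ overall. Step 1 (large $N^*$) is no longer needed, since the bucketing handles large searches as well: the top bucket has few vertices.

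The main obstacle is to ensure that the brute-force set is in the same size class, i.e.\ that restricting the min defining $\delta(v_{in}^{(\ell)})$ to bucket $\ell$ still makes the proof of \Cref{cl:brute-force} go through. I would check that the proof only used that $r$ and $r'$ range over the same index set used to define the shift; it did, so restricting to $N_\ell$ works verbatim. I would also verify that an edge $(u,\tilde r)$ added by brute force can go to a copy $u$ (which must have $\omega(u,v_i)\ge0$) without breaking distance preservation. That holds because its weight equals a genuine path weight through $\tilde r$.
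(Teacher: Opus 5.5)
Your proposal is correct and is essentially the paper's proof. The shared steps are: dyadic bucketing of $N$ by $|\tilde V^{in}_r|+|\tilde V^{out}_r|$; a fresh pair of copies $v_{in},v_{out}$ per bucket, with shifts taken over that bucket only; and brute force over $\tilde V^{in}_{r'}$ (resp.\ $\tilde V^{out}_{r'}$) for the same-bucket arg-min $r'$. Each $r$ therefore costs $O((|\tilde V^{in}_r|+|\tilde V^{out}_r|)^2)$, and \Cref{lem:V-out-in-size} gives $n^{2+o(1)}$ in total.

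One small precision point, which is equally loose in the paper, concerns the case split. It should be on whether $\omega(u,v_i)+\delta(v^{(\ell)}_{in})-\delta(v_i)\ge 0$ holds for the specific copy $v_i$ on the path, not merely on whether the edge $(u,v^{(\ell)}_{in})$ exists. That edge might be induced by a different copy and be too heavy. When the specific-copy inequality holds, $u\to v^{(\ell)}_{in}\to\tilde r$ has weight at most that of $u\to v_i\to\tilde r$. The brute-force argument only needs the negation of this inequality, so nothing else changes.
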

	
	First, we give some intuition. For negative vertices $r\in N$ such that $|\tilde V^{in}_r|+|\tilde V^{out}_r|\le\kappa$, \Cref{lem:shortcut-good-paths} shows that we can add two copies of each original vertex and $O((|\tilde V^{in}_r|+|\tilde V^{out}_r|)\cdot\kappa)$ edges to shortcut all locally-negative paths from these negative vertices. For the remaining  negative vertices $r$, we have $|\tilde V^{in}_r|+|\tilde V^{out}_r|>\kappa$, so we use the trivial bound of $|\tilde{V}_r^{in}|+|\tilde{V}_r^{out}|\le n$. For such $r\in N$, we add $O(n(|\tilde V^{in}_r|+|\tilde V^{out}_r|))$ edges, and we show that the total number of edges added for these vertices is not too large since $\sum_r (|\tilde V^{in}_r|+|\tilde V^{out}_r|)^2=O(kn^2/b)$ by \Cref{lem:V-out-in-size}. Therefore, the running time for both parts is bounded.
	
	Our improved construction partitions the negative vertices using $O(\log{n})$ different scales $\kappa_1,\ldots,\kappa_{O(\log{n})}$, where $\kappa_1=O(|G'_{\tau+1}|)$ and $\kappa_i/\kappa_{i+1}>1$ is a constant. For each $\kappa_i$, we define
	\[N_i=\{r\in N: |\tilde V^{in}_r|+|\tilde V^{out}_r|\ge \kappa_i\}.\]
	Note that $N_1$ is empty and $N_{O(\log{n})}$ contains all the negative vertices. In the $i^{th}$ round, we assume inductively that all locally-negative paths to and from negative vertices $r\in N_i$ have already been shortcut. Our goal is to shortcut all locally-negative paths to and from negative vertices $r\in N_{i+1}\setminus N_i$, so we can proceed to the next round. Using the construction in \Cref{lem:shortcut-good-paths}, we can shortcut all locally-negative paths to and from $r\in N_{i+1}\setminus N_i$ by adding two copies of each vertex and $O(\kappa_i(|\tilde V^{in}_r|+|\tilde V^{out}_r|))$ edges. Since $r\in N_{i+1}$ and $\kappa_{i+1}/\kappa_i=\Theta(1)$, we can bound $\kappa_i\le O(|\tilde{V}_r^{in}|+|\tilde{V}_r^{out}|)$, so the number of edges added (and runtime) is $O((|\tilde V^{in}_r|+|\tilde V^{out}_r|)^2)$ for each negative vertex. By \Cref{lem:V-out-in-size} and our choice of $b$, we can bound the total number of edges added by $n^{2+o(1)}$.
	
	We formalize the algorithm sketched above. First, initialize $G_{\tau+1}$ as $G'_{\tau+1}$. For each $i\in\mathbb{N}$, define $
	\kappa_i=\lfloor(2|G_{\tau+1}'|+1)/2^{i-1}\rfloor$ and $N_i=\{r\in N: |\tilde V^{in}_r|+|\tilde V^{out}_r|\ge \kappa_i\}$.
	For each $i=1,2,\ldots,L=\lfloor\log (2|G_{\tau+1}'|+1)\rfloor+1$ in order, do the following:
	\begin{enumerate}
		\item For each original vertex $v_0\in V$, we add a new copy, denoted $v_{in}$, into $G_{\tau+1}$, with shift
		\begin{align}
			\delta(v_{in})=\min_{j\in[c],r\in N_{i+1}\setminus N_i:v_{j}\in {V}^{in}_r}\omega(v_{j},\tilde{r})+\delta(v_{j}).\label{eq:def-delta-v-in}
		\end{align}
		We add all possible edges to and from $v_{in}$, as defined by \Cref{cl:add-copy}.
		
		\item For each original vertex $v_0\in V$, we add a new copy, denoted $v_{out}$, into $G_{\tau+1}$, with shift
		\begin{align}
			\delta(v_{out})=\max_{j\in[c],r\in N_{i+1}\setminus N_i:v_{j}\in V^{out}_r}-\omega(\tilde{r},v_{j})+\delta(v_{j}).\label{eq:def-delta-v-out}
		\end{align}
		We add all possible edges to and from $v_{in}$, as defined by \Cref{cl:add-copy}.
		
		\item For each $v_0\in V$, find $(j^{in}_v,r^{in}_v)$ such that $\delta(v_{in})=\omega(v_{j^{in}_v},\tilde{r}^{in}_v)+\delta(v_{j^{in}_v})$:
		\begin{align*}
			(j^{in}_v,r^{in}_v)=\arg\min_{j\in[c],r\in N_{i+1}\setminus N_i:v_{j}\in V^{in}_r}\omega(v_{j},\tilde r)+\delta(v_{j}).
		\end{align*}
		Then, for each $r\in N_{i+1}\setminus N_i$, and each $v_j\in V^{in}_r$, we enumerate every $u\in \tilde V^{in}_{r^{in}_v}$. If the edges $(u,v_j)$ and $(v_j,\tilde r)$ both exist and $\omega(u,v_j)+\omega(v_j,\tilde r)\ge 0$, add an edge $(u,\tilde r)$ to $G_{\tau+1}$ of weight $\omega(u,v_j)+\omega(v_j,\tilde r)$.
		
		\item For each $v_0\in V$, find $(j^{out}_v,r^{out}_v)$ such that $\delta(v_{out})=-\omega(\tilde{r}^{out}_v,v_{j^{out}_v})+\delta(v_{j^{out}_v})$:
		\begin{align*}
			(j^{out}_v,r^{out}_v)=\arg\max_{j\in[c],r\in N_{i+1}\setminus N_i:v_{j}\in V^{out}_r}-\omega(\tilde r, v_{j})+\delta(v_{j}).
		\end{align*}
		Then, for each $r\in N_{i+1}\setminus N_i$, and each $v_j\in V^{out}_r$, we enumerate every $w\in \tilde V^{out}_{r^{out}_v}$. If the edges $(\tilde r, v_j)$ and $(v_j, w)$ both exist and $\omega(\tilde r, v_j)+\omega(v_j, w)\ge 0$, add an edge $(\tilde r, w)$ to $G_{\tau+1}$ of weight $\omega(\tilde r, v_j)+\omega(v_j, w)$.
	\end{enumerate}
	Finally, we remove all imaginary negative edges from $G_{\tau+1}$. By construction, $G_{\tau+1}$ adds $O(\log{n})$ copies of each vertex and has shortcut edges \ref{item:shortcut-step-4} and \ref{item:shortcut-step-5}. It remains to show that all locally-negative paths are shortcut and bound the runtime, which we prove in the following subclaims.
	
	\begin{subclaim}
		Constructing $G_{\tau+1}$ runs in time $n^{2+o(1)}$.
	\end{subclaim}
	\begin{subproof}
		We consider each round $i$ separately. In steps (1) and (2), we add copies $v_{in}$ and $v_{out}$. Computing $\delta(v_{in})$ and $\delta(v_{out})$ takes at most $O(nck)=\tilde O(n^2)$ time and adding these copies takes $O(n^2c^2)=\tilde O(n^2)$ time. In steps (3) and (4), we shortcut the remaining locally-negative paths by enumerating. We enumerate over $r\in N_{i+1}\setminus N_i$ and $v_j\in V^{in}_r\cup V^{out}_r$, and then add at most $|\tilde V^{in}_{r'}|+|\tilde{V}_{r'}^{out}|$ edges, where $r'\in N_{i+1}\setminus N_i$. Since $r'\not\in N_i$, we have $|\tilde V^{in}_{r'}|+|\tilde{V}_{r'}^{out}| \le \kappa_i$, so the running time for these steps is $O(\kappa_i(|V^{in}_r|+|V^{out}_r|))$ for every negative vertex $r\in N_{i+1}\setminus N_i$. 
		Summing over $r\in N_{i+1}\setminus N_i$, the enumeration takes time
		\begin{align*}
			\sum_{r\in N_{i+1}\setminus N_i}O(\kappa_i(|V^{in}_r|+|V^{out}_r)|)
			&\le \sum_{r\in N_{i+1}\setminus N_i}O((|\tilde V^{in}_r|+|\tilde V^{out}_r|)^2)
		\end{align*}
		where we are using (1) $r\in N_{i+1}$, so $|\tilde V^{in}_r|+|\tilde V^{out}_r|>\kappa_{i+1}=\Theta(\kappa_i)$ and (2) $|V^{in}_r|\le|\tilde{V}_r^{in}|$ and $|V^{out}_r|\le|\tilde{V}_r^{out}|$.
		Summing over all iterations, the total runtime of steps (3) and (4) is $$O\left(\sum_{r\in N} (|\tilde V^{in}_r|+|\tilde V^{out}_r|)^2\right)\le O(kn^2/b)\le n^{2+o(1)}$$ by \Cref{lem:V-out-in-size} and the choice of $b=k/2^{\sqrt{\log n}}$. Combining the time spent on adding copies and enumerations gives the desired running time.
	\end{subproof}
	
	\begin{subclaim}
		We have $d_{G_{\tau+1}}(s,t)=d_{\Gamma}(s,t)$ for each $s,t\in V$.
	\end{subclaim}
	\begin{subproof}
		Since $G_{\tau+1}$ contains $\Gamma$, it suffices to show $d_{G_{\tau+1}}(s,t)\ge d_{\Gamma}(s,t)$ for each $s,t\in V_0$.
		
		We first show that $d_{G'_{\tau+1}}(s,t)\ge d_{\Gamma}(s,t)$. Edges added in (S4) and (S5) clearly correspond to a path in $\Gamma$, so they do not decrease the shortest path distances. Next, consider paths introduced by edges added in (S2) and (S3). For any $r\in V$, note that each edge $(u,\tilde r)$ has weight at least $d(u,r)-\Delta_r$, and each added edge $(\tilde r,v)$ has weight at least $d(r,v)+\Delta_r$. Thus, any $u\to \tilde r\to v$ path through $\tilde r$ has weight at least $d(u,r)+d(r,v)\ge d(u,v)$, as promised. 
		
		Next, we show $d_{G_{\tau+1}}(s,t)\ge d_{G'_{\tau+1}}(s,t)$. In constructing $G_{\tau+1}$ from $G'_{\tau+1}$, we first iteratively add copies $v_{in}$ and $v_{out}$ via \Cref{cl:add-copy}. In particular, \Cref{cl:add-copy} implies that after all iterations, the resulting graph satisfies invariant \ref{item:graph-invariant-2}, so these copies do not decrease the shortest path distances. Finally, we remove all imaginary negative edges, which also doesn't decrease the distances, completing the proof.
	\end{subproof}
	
	\begin{subclaim}\label[subclaim]{cl:shortcut-i-i+1}
		Using the copies and edges added in the $i$-th round, every locally-negative path that uses a vertex $\tilde r$ with $r\in N_{i+1}\setminus N_i$ is shortcut.
	\end{subclaim}
	\begin{subproof}
		We first consider a locally-negative path $u\to v_j\to \tilde r$, where $r\in N_{i+1}\setminus N_i$ and $v_j\in {V}_{in}(r)$. To shortcut this path, we will use the path $u\to v_{in}\to \tilde r$ if edges $(u,v_{in})$ and $(v_{in},\tilde{r})$ both exist. Since we add edges to and from $v_{in}$ using \Cref{cl:add-copy}, we know
		\begin{itemize}
			\item we add edge $(u,v_{in})$ with weight $\omega(u,v_{in})$ if $\omega(u,v_i)+\delta(v_{in})-\delta(v_i)\ge0$ for some $i$.
			\item we add edge $(v_{in},\tilde{r})$ with weight $\omega(v_{in},\tilde{r})$ if $\omega(v_i,\tilde r)+\delta(v_i)-\delta(v_{in})\ge 0$ for some $i$.
		\end{itemize}
		We claim that the edge $(v_{in},\tilde r)$ is always added, regardless of the choice of $j,r$. Since $r\in N_{i+1}\setminus N_i$ and $v_j\in {V}_{in}(r)$, we know $\omega(v_{j},\tilde{r})+\delta(v_{j})$ is one term of the minimization when defining $\delta(v_{in})$ in \eqref{eq:def-delta-v-in}. Therefore, we have $\delta(v_{in})\le \omega(v_{j},\tilde{r})+\delta(v_{j})$, which implies $\omega(v_{j},\tilde{r})+\delta(v_{j})-\delta(v_{in})\ge 0$, proving the claim. Thus, as long as $\omega(u,v_i)+\delta(v_{in})-\delta(v_i)\ge 0$ for some $i$ as well, both edges are added and the locally-negative path is shortcut by $u\to v_{in}\to\tilde{r}$.

		Otherwise, suppose that $\omega(u,v_i)+\delta(v_{in})-\delta(v_i)<0$ for all copies $i$, so in particular we have $\omega(u,v_j)+\delta(v_{in})-\delta(v_j)<0$. Recall that in Step (3), we find negative vertex $r^{in}$ and index $j^{in}\in[c]$ such that $\delta(v_{in})=\omega(v_{j^{in}},\tilde r^{in})+\delta(v_{j^{in}})$; we leave the dependence on $v$ implicit for simplicity. Plugging this into the expression before, we get
		$$\omega(u,v_j)-\delta(v_j)+\delta(v_{j^{in}})+\omega(v_{j^{in}},\tilde r^{in})<0,$$
		which implies that the path $u\to v_j\to v_{j^{in}}\leadsto r^{in}$ in $H$ has weight less than $\Delta_{r^{in}}$ (recall that $\omega(v_{j^{in}},\tilde r^{in})=d^0(v_{j^{in}},r^{in})-\Delta_{r^{in}}$). Since this path has at most one hop (which is the edge $v_j\to v_{j^{in}}$), we have $u\in \tilde V^{in}_{r^{in}}$ by \Cref{lem:forward-backward}. Therefore, we must have found and shortcut this locally-negative path during our enumeration in Step (3) of the algorithm.
		
		The case for locally-negative paths of the form $\tilde r\to v_j\to w$, where $r\in N_{i+1}\setminus N_i$ and $v_j\in V^{out}_r$ is symmetric. To shortcut this path, we will use the path $\tilde r\to v_{out}\to w$ if edges $(\tilde{r},v_{out})$ and $(v_{out},w)$ both exist. Since we added edges to and from $v_{out}$ using \Cref{cl:add-copy}, we know
		\begin{itemize}
			\item we add edge $(\tilde{r},v_{out})$ with weight $\omega(\tilde{r},v_{out})$ if $\omega(\tilde{r},v_i)+\delta(v_{out})-\delta(v_i)\ge0$ for some $i$.
			\item we add edge $(v_{out},w)$ with weight $\omega(v_{out},w)$ if $\omega(v_i,w)+\delta(v_i)-\delta(v_{out})\ge0$ for some $i$.
		\end{itemize}
		We claim that edge $(\tilde r,v_{out})$ is always added, regardless of the choice of $j,r$. Since $r\in N_{i+1}\setminus N_i$ and $v_j\in {V}_{out}(r)$, we know $-\omega(\tilde{r},v_{j})+\delta(v_{j})$ is one term of the maximization when defining $\delta(v_{out})$ in \eqref{eq:def-delta-v-out}. Therefore, we have $\delta(v_{out})\ge -\omega(\tilde{r},v_{j})+\delta(v_{j})$, which implies $\omega(\tilde{r},v_j)+\delta(v_{out})-\delta(v_j)\ge 0$, proving the claim. Thus, as long as $\omega(v_i,w)+\delta(v_i)-\delta(v_{out})\ge0$ for some $i$ as well, both edges are added and the locally-negative path is shortcut by $\tilde{r}\to v_{out}\to w$.
		
		Otherwise, suppose that $\omega(v_i,w)+\delta(v_i)-\delta(v_{out})<0$ for all copies $i$, so in particular we have $\omega(v_j,w)+\delta(v_j)-\delta(v_{out})<0$. Recall that we find negative vertex $r^{out}$ and index $j^{out}\in[c]$ such that $\delta(v_{out})=-\omega(\tilde{r}^{out},v_{j^{out}})+\delta(v_{j^{out}})$. Plugging this into the expression before, we get 
		$$\omega(v_j,w)+\delta(v_j)+\omega(\tilde r^{out}, v_{j^{out}})-\delta(v_{j^{out}})<0,$$
		which implies that the path $r^{out}\leadsto v_{j^{out}}\to v_j\to w$ has weight at most $-\Delta_{r^{out}}$ (recall that $\omega(\tilde r^{out}, v_{j^{out}})=d^1(r^{out},v_{j^{out}})+\Delta_{r^{out}}$). Since this path has at most two hops, the edge $v_{j^{out}}\to v_j$ and one in $r^{out}\leadsto v_{j^{out}}$, so $w\in \tilde V^{out}_{r^{out}}$ by \Cref{lem:forward-backward}. Therefore, we must have found and shortcut this locally-negative path during our enumeration in Step (4) of the algorithm.
	\end{subproof}
	
	We now complete the proof. \Cref{cl:shortcut-i-i+1} proves that all locally-negative paths to or from $r\in N_{i+1}\setminus N_i$ are shortcut during the $i$-th round, for each $i\in[L]$. Since $\kappa_1=2|G'_{\tau+1}|+1$ and $\kappa_L=0$, we know $N_1=\emptyset$ and $N_L=N$. Therefore, each negative vertex $r\in N$ is in $N_{i+1}\setminus N_i$ for some $i\in[L]$, so all locally-negative paths are shortcut.

	\subsection{Completing the Shortcutting}
	Finally, we complete the proof of \Cref{thm:one-iteration-shortcutting}.
	In \Cref{sec:algorithm}, we constructed $G'_{\tau+1}$ using one recursive call to negative-weight single-source shortest paths on a graph with $O(n)$ vertices and $k/2^{\sqrt{\log{n}}}$ negative vertices, and using $n^{2+o(1)}$ additional time. 
	From $G'_{\tau+1}$, \Cref{lem:faster-shortcut-good-paths} constructs $G_{\tau+1}$ in $n^{2+o(1)}$ time satisfying the following:
	\begin{enumerate}
		\item shortest path distances are preserved: $d_{G_{\tau+1}}(s,t)=d_{\Gamma}(s,t)$,
		\item every locally-negative two-edge path in $G'_{\tau+1}$ is shortcut in $G_{\tau+1}$,
		\item $G_{\tau+1}$ has at most $O(\log{n})$ more copies of each original vertex,
		\item $G_{\tau+1}$ has shortcut edges \ref{item:shortcut-step-4} and \ref{item:shortcut-step-5}.
	\end{enumerate}
	Finally, we apply \Cref{lem:one-negative-outgoing-edge} to guarantee that each vertex has at most one negative edge incident to it. Properties (1) and (3) above immediately implies property (1) and (3) of \Cref{thm:one-iteration-shortcutting}.
	Since $d_{\Gamma}(s,t)=d_{G_{\tau}}(s,t)+\phi_{\tau}(s)-\phi_{\tau}(t)$ and similarly $d_{\Gamma}^h(s,t)=d_{G_{\tau}}^h(s,t)+\phi_{\tau}(s)-\phi_{\tau}(t)$, property (1) above implies property (1) in \Cref{thm:one-iteration-shortcutting}.
	Since $G_{\tau+1}$ satisfies properties (2) and (4) above, \Cref{lem:shortcut-constant-factor} implies that any path $P$ in $\Gamma$ with $h$-hops has a corresponding path in $G_{\tau+1}$ with $(h-\lfloor h/3\rfloor)$-hops. In particular, this implies property (4) in \Cref{thm:one-iteration-shortcutting}: $d^{h-\lfloor h/3\rfloor}_{G_{\tau+1}}(s,t)\le d^h_{\Gamma}(s,t)$.

    \section*{Acknowledgement}
    We thank Satish Rao for helpful comments and discussions. George Li is supported by the National Science Foundation Graduate Research Fellowship Program under Grant No. DGE2140739.
    
	\bibliographystyle{alpha}
	\bibliography{ref}

@inproceedings{fineman2024single,
  title={{Single-Source Shortest Paths with Negative Real Weights in $\tilde{O}(mn^{8/9})$ Time}},
  author={Fineman, Jeremy T},
  booktitle={Proceedings of the 56th Annual ACM Symposium on Theory of Computing},
  pages={3--14},
  year={2024}
}

@inproceedings{huang2025faster,
  title={{Faster Single-Source Shortest Paths with Negative Real Weights via Proper Hop Distance}},
  author={Huang, Yufan and Jin, Peter and Quanrud, Kent},
  booktitle={Proceedings of the 2025 Annual ACM-SIAM Symposium on Discrete Algorithms (SODA)},
  pages={5239--5244},
  year={2025},
  organization={SIAM}
}

@inproceedings{huang2026faster,
  title={{Faster Negative Length Shortest Paths by Bootstrapping Hop Reducers}},
  author={Huang, Yufan and Jin, Peter and Quanrud, Kent},
  booktitle={Proceedings of the 2026 Annual ACM-SIAM Symposium on Discrete Algorithms (SODA)},
  year={2026},
  organization={SIAM}
}

@article{dinitz2017hybrid,
  title={{Hybrid Bellman--Ford--Dijkstra Algorithm}},
  author={Dinitz, Yefim and Itzhak, Rotem},
  journal={Journal of Discrete Algorithms},
  volume={42},
  pages={35--44},
  year={2017},
  publisher={Elsevier}
}

@inproceedings{duan2025breaking,
  title={{Breaking the Sorting Barrier for Directed Single-Source Shortest Paths}},
  author={Duan, Ran and Mao, Jiayi and Mao, Xiao and Shu, Xinkai and Yin, Longhui},
  booktitle={Proceedings of the 57th Annual ACM Symposium on Theory of Computing},
  pages={36--44},
  year={2025}
}

@article{bernstein2025negative,
  title={{Negative-Weight Single-Source Shortest Paths in Near-Linear Time}},
  author={Bernstein, Aaron and Nanongkai, Danupon and Wulff-Nilsen, Christian},
  journal={Communications of the ACM},
  volume={68},
  number={2},
  pages={87--94},
  year={2025},
  publisher={ACM New York, NY, USA}
}

@article{chen2025maximum,
  title={{Maximum Flow and Minimum-Cost Flow in Almost-Linear Time}},
  author={Chen, Li and Kyng, Rasmus and Liu, Yang and Peng, Richard and Probst Gutenberg, Maximilian and Sachdeva, Sushant},
  journal={Journal of the ACM},
  volume={72},
  number={3},
  pages={1--103},
  year={2025},
  publisher={ACM New York, NY}
}

@misc{quanrud2025sparsification,
      title={{From Hop Reduction to Sparsification for Negative Length Shortest Paths}}, 
      author={Kent Quanrud and Navid Tajkhorshid},
      note={To appear in STOC 2026},
      year={2026} 
}

@misc{li2025shortcutting,
      title={{Shortcutting for Negative-Weight Shortest Path}}, 
      author={George Z. Li and Jason Li and Satish Rao and Junkai Zhang},
      note={To appear in STOC 2026},
      year={2026} 
}
\end{document}